\def\be{\begin{equation}}
\def\ee{\end{equation}}
\def\ba{\begin{array}{c}}
\def\ea{\end{array}}
\newcommand{\kt}{\rangle}
\newcommand{\br}{\langle}
\newtheorem{thm}{Theorem}
\newtheorem{cor}[thm]{Corollary}
\newtheorem{lemma}[thm]{Lemma}
\newenvironment{proof}{\noindent
 {\bf Proof.}}{\hfill$\square$\vspace{3mm}\endtrivlist}
\begin{document}

\begin{center}

{\Large

Broken Hermiticity phase transition in Bose-Hubbard model

}

\vspace{0.8cm}

  {\bf Miloslav Znojil}

\vspace{0.2cm}

\vspace{1mm} Nuclear Physics Institute of the CAS, Hlavn\'{\i} 130,
250 68 \v{R}e\v{z}, Czech Republic

{e-mail: znojil@ujf.cas.cz}

\end{center}

\section*{Abstract}

For the two-mode and $(N-1)-$bosonic Bose-Hubbard quantum system a
less usual phase transition controlled by the parameter
$\varepsilon$ representing the on-site energy difference is studied.
In the literature the parameter is considered either real
($\varepsilon>0$) or purely imaginary (with, say, $\gamma={\rm Im}\
\varepsilon>0$), so the phase transition is analyzed here at the
interface $\varepsilon=\gamma=0$. The evolution in the
$\gamma-$controlled phase is required unitary so that the main task
for the theory is found in the (quasi-)Hermitization of the
Hamiltonian, achieved by a suitable amendment of the inner product
in Hilbert space, $\br \cdot|\cdot\kt \to \br
\cdot|\Theta|\cdot\kt$. In the most relevant domain of small
$\gamma$ the linearized Hilbert-space metric $\Theta(\gamma)$
(constrained by the requirement $\lim_{\gamma \to
0}\Theta(\gamma)=I$ of the smoothness of the change of the Hilbert
space at the phase transition) is constructed in closed form. Beyond
the phase-transition instant, several forms of the systematic
non-numerical recurrent construction of the exact metrics
$\Theta(\gamma)$ are also shown user-friendly and feasible, at the
not too large matric dimensions $N$ at least.

\subsection*{Keywords}

quantum phase transitions; Bose-Hubbard model; non-Hermitian ${\cal
PT}-$symmetric phase; unitary evolution; {\it ad hoc\,}
Hilbert-space metrics; non-numerical construction methods;

\newpage

\section{Introduction}

Bosonic version of the Hubbard model is called Bose-Hubbard model
\cite{bose}. It describes the zero-spin particles on a lattice at
zero temperature in a way which is well adapted, in the context of
solid state physics, to the study of the phase transition between
its superfluid and insulator phases induced by the variation of the
density \cite{boseb}. In the conventional many-body setting the
Bose-Hubbard (BH) Hamiltonian is, typically, able to deal with the
Bose-Einstein condensation \cite{BEC}. The amendments of the model
can also offer a theoretical background to various other forms of
the quantum phase transitions, say, in optical lattices \cite{OL}.

The well known user-friendly mathematical tractability of the model
\cite{Zhang} can be perceived as originating from its Lie-algebraic
background. Thus, one can write the two-mode version of the
conventional self-adjoint BH Hamiltonian in terms of the two
angular-momentum generators $L_{x,z}$ of Lie algebra $su(2)$ using
just three real parameters $\varepsilon, v$ and $c$ \cite{Uwe},
 \be
 \mathfrak{h}_{(BH)}(\varepsilon,v,c)
 =2\varepsilon\,L_z+2v\,L_x+ 2c\,L^2_z
 =
  \mathfrak{h}^\dagger_{(BH)}(\varepsilon,v,c)
 \,.
 \label{hermbh}
 \ee
Obviously, the most efficient treatment of the changes caused by the
variations of the physical quantity $2c$ representing the strength
of the interbosonic interactions will be provided by perturbation
theory. Still, even if we restrict attention to the zero-order
approximation, we are left with the variability of the two
independent parameters, viz., of the quantity $2v$ which measures
the intensity of the single-particle tunneling, and of the value
$2\varepsilon$ which characterizes the bosonic on-site energy
difference. One of these quantitites may be fixed via a suitable
choice of the units. Thus, once we set, say, $v=1$, we only have to
study the one-parametric problem.

In such a setting the authors of Refs.~\cite{Uwe,Uweb} imagined that
it is far from obvious that the parameter in question must be real.
They gave several tenable arguments supporting the study of the
possible inclusion of non-Hermiticities. In particular, the authors
of Ref.~\cite{Uwe} proposed the replacement of the bosonic on-site
energy by a purely imaginary quantity,
 \be
 2\varepsilon \to 2{\rm i} \gamma\,.
 \label{modifi}
 \ee
Naturally, the change opened a Pandora's box of interpretational
challenges. The main one was that the new, complexified Bose-Hubbard
(CBH) Hamiltonian ceased to be a self-adjoint operator,
 \be
 H_{(CBH)}(\gamma,v,c)=
  -2\,{\rm i}\gamma \,L_z +2v\,L_x+ 2c\,L^2_z\ \neq \
  H_{(CBH)}^\dagger(\gamma,v,c)\,.
  \label{Uwemo}
 \ee
In ~\cite{Uwe} the problem has been settled by an
open-quantum-system upgrade of the underlying physics. In essence,
an {\it ad hoc\,} external field has been assumed to mimic the
influence of the environment causing the parameter-controlled gains
and/or losses of the bosons.

The authors of the idea felt inspired by the recent growth of
interest in the Hamiltonians which are non-self-adjoint but ${\cal
PT}-$symmetric (cf., e.g., reviews \cite{Carl,ali,book}). On this
background it was possible to conclude that in the CBH model one can
clearly distinguish between its ``stable'' and ``unstable''dynamical
regimes, separated by a new form of phase transition. In the former
case, indeed, all of the eigen-energies remain real because the
${\cal PT}-$symmetry of the system is observed not only by the
Hamiltonian but also by its eigen-states. In the ``unstable'' case,
on the contrary, the ${\cal PT}-$symmetry becomes spontaneously
broken. This means that some of the energies complexify while the
related eigen-states cease to be ${\cal PT}-$symmetric.

The CBH-related research found one of its central topics in the
study of the ``instants'' of the breakdown of ${\cal PT}-$symmetry.
The existence of such ``exceptional points'' (EP) in the analytic
quantum Hamiltonians is well know to mathematicians \cite{Kato}.
Still, the occurrence and the role of EPs in various physical
systems has only been clarified rather recently \cite{Heiss}. In
particular, the careful localization of the EP singularities
$\gamma^{(EP)}_{(CBH)}(v,c)$ helped the authors of Ref.~\cite{Uwe}
to clarify further the connection between the CBH Hamiltonians
(\ref{Uwemo}) and the Bose-Einstein condensation.

In \cite{45} we pointed out that besides the CBH class of the
phenomenological models it is also possible to construct and use
their various complex-symmetric $N$ by $N$ matrix generalizations
for the same phenomenological purposes and in the same EP-related
context. We concluded that one of the most characteristic physical
features of {\em any\,} Hamiltonian of the EP-supporting type is
that in the apparently most interesting EP-controlled dynamical
regime in which the system can perform a quantum phase transition
the operator itself is {\em strongly\,} non-Hermitian \cite{Denis}.
This leads to a rather paradoxical situation in which the study of
the {\em weakly\,} non-Hermitian regime (which is closer to the
conventional Hermitian regime) is almost completely neglected in the
literature. Now we intend to fill the gap. On an entirely abstract
conceptual level such a project is promising because the popular
restriction of attention to the strongly non-Hermitian EP-related
quantum phase transitions is unnecessarily restrictive
\cite{Borisov}. Here, we shall accept a different, less restrictive
philosophy.

Our present project is inspired by our methodological study
\cite{interface} in which a new model-building strategy has been
outlined. In essence, we described there the new type of an
EP-unrelated quantum phase transition using just a
weak-non-Hermiticity mathematical background. Our return to this
subject was recently re-encouraged when we noticed that the
phenomenon of the EP-unrelated quantum phase transition was also
revealed and predicted in several non-Hermitian
multidimensional-oscillator examples \cite{Fern} as well as in the
realistic-physics context of the dissipative photonic systems
\cite{Joglekar} and/or, on experimental level of classical-physics
simulations, of synthetic circuits \cite{Joglekarb}.

In \cite{interface} our specific interface-passage considerations
were illustrated by a schematic two-by-two matrix Hamiltonian. We in
fact did not pay too much attention to the details of the evolution
{\em after\,} the passage. We were aware that a realistic
illustration would be highly desirable and that such an illustration
might have been provided by the CBH Hamiltonians (\ref{Uwemo}).
Still, we felt that the task might be prohibitively complicated,
especially because after the passage the consistent description of
the stable evolution of the system in question would {\em
necessarily\,} require the {\em explicit\,} construction of the so
called operator of charge \cite{Carl} or, in a more general
quasi-Hermitian setting of Refs.~\cite{Geyer,timedep}, of the so
called Hilbert-space metric $\Theta$.

Only recently, having reread section \# 3 of paper \cite{Uwe} we
imagined that a compromising solution might have been sought, and
the purpose could have been served, by the simplified, unperturbed
CBH model with $c=0$. The idea proved productive and it led to the
results presented in what follows. Their presentation will be
preceded by section \ref{kap2} in which the reader finds a compact
outline of the current stage of development of the concept of
quantum phase transition, with special emphasis upon its
non-Hermitian descriptions. In subsequent section \ref{ohub} we
shall describe the necessary technical aspects of the CBH model in
its separate finite-dimensional $N$ by $N$ matrix representations. In
particular, we shall point out that the consistent presentation of
the model necessitates, via the construction of $\Theta_{(CBH)}$,
the {\em explicit\,}  specification of the ``standard''
physical Hilbert space ${\cal H}^{(S)}_{(CBH)}$ in which one only
can clarify the notion of the observability \cite{ali}.

It is worth re-emphasizing that we will treat the CBH Hamiltonian
(\ref{Uwemo}) (with $c=0$) as the quasi-Hermitian operator
\cite{Geyer,Dieudonne}, i.e., as the generator of the evolution
which is {\em unitary} in ${\cal H}^{(S)}$. In contrast to the
open-system theories~\cite{Nimrod}, our present version of the CBH
model will be built differently, as a closed quantum system without
any implicit or explicit reference to an interaction with the
environment. The feasibility of such a project will be facilitated
by the solvability of the model guaranteeing the reality (i.e., the
potential observability) of the spectrum in a sufficiently large
interval of $\gamma \in (0,\gamma_{\max})$.

As we already indicated, our main (and, in practice, almost always
most difficult) technical task will be the construction of the
after-the-transition Hamiltonian-dependent Hilbert space ${\cal
H}^{(S)}={\cal H}^{(S)}_{(CBH)}(\gamma)$ or, more precisely, of its
acceptable physical inner product. This will be done in section
\ref{uukol} (for the first nontrivial choice of $N=3$), in section
\ref{uxkol} (for $N=4$) and in section \ref{vvkol} [where we will
discuss the extrapolation of our knowledge to all $N$, with tests
performed at $N=5$ (in subsection \ref{5kol}) and $N=6$ (in
\ref{mol})]. Finally, our message will be summarized in section
\ref{summat}.

\section{Hermitian - quasi-Hermitian phase transition\label{kap2}}

\subsection{Analytic Hamiltonians and phase transitions at exceptional
points}

During the study of the phenomena called quantum phase transitions
one might often hesitate whether certain abrupt changes of
properties of a given system should still be given the name of phase
transition. One not always finds a guidance in parallels between the
classical and quantum physics \cite{Sachdev}. In one direction, for
a given quantum model it is not always easy to deduce the classical
$\hbar \to 0$ limit. The situation is even worse in the opposite
direction in which the correspondence principle leads to
quantization recipes which may be ambiguous \cite{Hall}.

Examples of the incompleteness of the parallels abound, especially
after the physics community accepted the idea that it may be useful
to study stable quantum systems in their non-Hermitian (usually
called ${\cal PT}-$symmetric {\it alias\,} pseudo-Hermitian)
representations (cf., e.g., the respective comprehensive reviews
\cite{Carl} and \cite{ali}). In such a framework several new
theoretical ideas emerged during the last 25 years (cf., e.g., the
introductory chapter in Ref.~\cite{book}). In 1992, for example, the
well known, exactly solvable $N-$fermion Lipkin-Meshkov-Glick model
of Ref.~\cite{LMG} has been generalized, by Scholtz et al
\cite{Geyer}, in a way which sampled, in the context of
many-particle quantum physics, several new forms of phase
transitions. Between 1997 and 1998 Bender with coauthors
\cite{BM,BB} discovered, in a different context of quantum field
theory, an equally interesting class of innovative quantum phase
transitions which they called the spontaneous breakdown of ${\cal
PT}$ symmetry.

These discoveries were followed by the identification of several
quantum phase transition phenomena reflecting the presence of the
Kato's exceptional point (EP, \cite{Kato}) in the Hamiltonian. In
the most frequently encountered phase transition of this type the
energies become complex so that the physical interpretation  of the
original, unitarily evolving and stable quantum system was lost. For
the similar situations it is characteristic that one must introduce
some new degrees of freedom so that the initial Hamiltonian
$\Lambda_0^{(before)}$ as well as at least some of the other
observables $\Lambda_1^{(before)}, \Lambda_2^{(before)}, \ldots$
must be replaced, after the passage of the system in question
through its EP singularity, by some entirely different operators
$\Lambda_j^{(after)}$ with $j=0, 1, \ldots$.

In general, the description of the latter (also known as ```first
kind'') quantum phase transition requires also the change of the
underlying physical Hilbert space, ${\cal H}^{(before)} \to {\cal
H}^{(after)}$. Still, for some rather special quantum systems there
also exist exceptions. In these cases one can admit the survival of
kinematics (with ${\cal H}^{(before)} \equiv {\cal H}^{(after)}$) as
well as of the dynamics (controlled by {\em the same}, EP-possessing
and unchanged non-Hermitian Hamiltonian). This scenario (called
``quantum phase transition of the second kind'', cf. \cite{Borisov})
is characterized by the mere {\em partial\,} loss of the
observability involving just a {\em subset\,} of all of the relevant
$\Lambda_j$s with $j\neq 0$.

In the latter scenario the evolution may be required to remain
unitary. This is rendered possible by the fact that after the phase
transition the bound-state energies are still real and observable.
Nevertheless, as long as the change always involves at least some of
the $j>0$ observables $\Lambda_j^{(before)}$, one encounters a full
freedom in the choice of their descendants $\Lambda_j^{(after)}$. As
a consequence, the change of operators $\Lambda_j^{(before)} \to
\Lambda_j^{(after)}$ (which are all, under our overall
unitary-evolution hypothesis, {\em necessarily\,} quasi-Hermitian
\cite{Geyer}) may imply, in general, a parallel consistent change of
the metric operator, $\Theta^{(before)} \neq \Theta^{(after)}$. An
analogous scenario will be also used and built in our present paper.

\subsection{The instant of onset of non-Hermiticity\label{osum}}

In the toy-model of Ref.~\cite{interface} the conventional
Hermiticity of observables was lost and replaced,  at an {\it ad
hoc\,} phase-transition interface, by the so-called
quasi-Hermiticity. Several features of the passage of the system
through such a boundary were discussed, with emphasis upon the
methodological aspects of the problem. The passage from the Hermitian to
quasi-Hermitian dynamical regime was illustrated by the most
elementary two-by-two-matrix toy model. Our Hamiltonian $H$ was
time-dependent and non-Hermitian but ${\cal PT}-$symmetric, with
real spectrum. Its elementary nature helped us to clarify the basic
features of the mathematically correct treatment of the dynamics of
the system. The question of a more realistic physical applicability
of the formalism remained open.

Let us now return to Eq.~(\ref{Uwemo}) representing one of the most
interesting non-Hermitian but still deeply realistic Hamiltonians.
In subsequent sections we shall review some of the basic properties
of the model, emphasizing the difference between its two possible
physical probabilistic interpretations. We shall explain that in a
way outlined in Refs.~\cite{Nimrod} and \cite{ali} this difference
reflects the freedom of the choice between the non-unitarity and
unitarity of the evolution or between the theoretical framework of
the open and closed quantum system, respectively.

We shall restrict our attention to the unitary case. It has an
advantage that our knowledge of the dynamics is complete, not
involving any hypothetical environment. Due to the ``hidden'' form
of the Hermiticity of the observables the dynamical information
about the evolution may be carried not only by the Hamiltonian but
also by the above-mentioned operator $\Theta$. Indeed, the latter
Hilbert-space-metric operator carries such an information because it
determines the correct physical inner product in the standard
Hilbert space ${\cal H}^{(S)}$ \cite{SIGMA}.

Needless to add that in many models a guarantee of the compatibility
between the information carried by $H$ and $\Theta$ can be
nontrivial \cite{arabky}. In fact, the necessity of this guarantee
has been perceived, in the past, as one of the key obstructions of
the applicability of the pseudo-Hermitian {\it alias\,} ${\cal
PT}-$symmetric constructions in realistic situations.

\subsection{Non-Hermitian phase and the unitarity of its evolution }

The authors of paper \cite{Uwe} circumvented the search for the
Hilbert space  ${\cal H}^{(S)}$ via the open-system physical
treatment of their manifestly non-Hermitian quantum CBH
Hamiltonians. They only studied the localization of the spontaneous
breakdown of ${\cal PT}-$symmetry in the strongly non-Hermitian
dynamical regime. In this case, the parameter $\gamma$ measuring the
strength of the non-Hermiticity was assumed large, close to its
maximal, transition-responsible EP value $\gamma^{(EP)}$.

Incidentally, we should add that even in the latter, truly extreme
dynamical regime it should still be possible to follow the
unitary-evolution philosophy and constructions, in principle at
least. In practice one can of course expect that these constructions
will be perceivably easier in the weakly non-Hermitian regime.

In the latter regime the main phenomenological advantages of the
closed-system approach are twofold. Firstly, the unitary picture of
the evolution generated by matrices $H^{(N)}_{(CBH)}(\gamma)$ is
{\em complete}. There is no need of referring to an unspecified
environment \cite{Jones}. In ${\cal H}^{(S)}$ the underlying quantum
theory becomes fully compatible with the conventional textbooks
\cite{Messiah}. Secondly, the conservation of the unitarity during
the non-EP phase transitions may open the way towards a matching of
two alternative Hilbert-space representations of quantum world in a
unified picture.

The replacement (\ref{modifi}) may acquire, in this spirit, a
smooth-transition meaning at an interface where
$\varepsilon=\gamma=0$. For the sake of definiteness let us agree
that we shall follow the very slow adiabatic change of the initially
Hermitian BH model (\ref{hermbh}) in the limit of vanishing small
$\varepsilon \to 0$. After the system touches the interface and
after it performs the Hermiticity - non-Hermiticity phase transition
(\ref{modifi}), the subsequent evolution will be controlled by the
complexified model (\ref{Uwemo}), with a small $\gamma$, i.e., with
the small {\em imaginary\,} on-site energy difference.

In the new regime one must discuss several mathematical components
of the model requiring, e.g., the reality of the energies and,
secondly, a phenomenologically sufficiently well motivated choice of
the correct physical Hilbert space. The latter process involves not
only an explicit and correct constructive assignment of the metric
$\Theta=\Theta(H)$ to the preselected Hamiltonian $H$ but also an
appropriate suppression of the ambiguity of such an assignment (cf.,
e.g., \cite{lotor} for an eligible ``minimality'' criterion and
technique of such a suppression).

\section{Hermitian vs. quasi-Hermitian Bose-Hubbard model
\label{ohub}}

\subsection{The matching at $\varepsilon=\gamma=0$}

After the introductory considerations let us now turn
attention back to the exactly solvable interaction-free (i.e.,
$c=0$) Bose-Hubbard model in its versions~(\ref{hermbh})
and~(\ref{Uwemo}). No special remarks have to be added to the former
case which is an entirely conventional Hermitian model. In our
present notation it will be assigned, at any value of its real
variable parameter $\varepsilon \geq 0$, the trivial metric
$\Theta_{Hermitian}=I$.

Once we intend to speak about the Hermiticity/quasi-Hermiticity
quantum phase transition in the limit $\varepsilon \to 0$ (i.e.,
say, at time $t=0$), we only have to describe the post-transition
quantum system using the elementary complex upgrade (\ref{modifi})
of the parameter at any $t>0$. The unitarity of the $t>0$ evolution
may be then guaranteed by the proper interpretation of
Schr\"{o}dinger equation in an adiabatic approximation \cite{NIP},
i.e., for the sufficiently small times and $\gamma$s at least
\cite{Doppler}. Our upgraded quasi-Hermitian Bose-Hubbard
Hamiltonian (\ref{Uwemo}) will be, subsequently, assigned a
nontrivial metric $\Theta=\Theta(H)$.

The resulting pair of operators $H_{(CBH)}(\gamma)$ and
$\Theta_{(CBH)}(\gamma)$ describing the dynamics after $t=0$ must
necessarily satisfy the Dieudonn\'{e}'s constraint
 \be
 H^\dagger \Theta=\Theta H\,.
  \label{requi}
 \ee
In what follows we intend to satisfy such a requirement
constructively. Our construction will be facilitated by the
knowledge of some basic properties of operators $H_{(CBH)}(\gamma)$
as provided by the authors of paper \cite{Uwe}. As long as they did
not pay attention to the construction of $\Theta_{(CBH)}(\gamma)$,
we will have to address the following questions:

\begin{itemize}

\item
we will have to solve Dieudonn\'{e}'s Eq.~(\ref{requi}) interpreted
as an implicit definition of metric;

\item
as long as the latter definition is ambiguous \cite{SIGMAdva}, we
will restrict the class of solutions to the subclass of the
candidates for metric which are positive definite;

\item
we will have to reduce the resulting set of eligible metrics
$\Theta=\Theta_j(\gamma)$ to such a sub-family
$\Theta_j^{(0)}(\gamma)$ for which the phase transition at $t=0$
would be smooth, i.e., for which the metric candidate trivializes in
the limit $\gamma\to 0^+$, i.e.,
 \be
 \lim_{\gamma \to 0^+}\Theta_j^{(0)}(\gamma)=I\,;
 \ee

\item
subsequently, we will be allowed to fix all of the remaining free
parameters arbitrarily.

\end{itemize}

\subsection{Quasi-Hermitian dynamical regime with
$\gamma>0$\label{solohub}}

The unperturbed models with $c=0$ remain exactly solvable in both
the Hermitian and non-Hermitian cases (cf. section \# 3 in
\cite{Uwe}). Their mutual relationship (\ref{modifi}) is usually
perceived as purely formal. In Ref.~\cite{Uwe}, for example, the
authors claim that the study of complexified model (\ref{Uwemo}) is
most interesting in the strongly non-Hermitian domain, near the
above-mentioned Kato's exceptional points $\gamma^{(EP)}$, and far
from the onset-of-non-Hermiticity limit $\gamma \to 0$. This makes a
false impression that the model is only interesting very far from
its possible phase-transition-like transmutation into its
self-adjoint partner (\ref{hermbh}), say, at a hypothetical
interface with $\varepsilon\approx 0 \approx \gamma$. We believe
that the opposite is true. A truly exciting physics might be
expected to emerge at small $\varepsilon$ and $\gamma$. First of
all, the passage of a quantum system in question through the
$\gamma=\varepsilon=0$ interface would be a rather unusual and
specific quantum phase transition. Secondly, the study of passages
through an interface of such a type might throw new light on the
range of validity of the recent discoveries of the failure of
adiabatic hypothesis in the strongly non-Hermitian regime
\cite{Doppler}. Thirdly, the study of the interface in a weakly
non-Hermitian representation might prove technically feasible.

\subsubsection{Representation by matrices}

In the light of the representation theory of $su(2)$ the toy-model
operator (\ref{Uwemo}) may be decomposed into an infinite family of
its finite-dimensional $N$ by $N$ matrix representations
$H^{(N)}_{(CBH)}(\gamma)$. They may be sampled by the $N=2$ matrix
 \be
 H^{(2)}_{(CBH)}(\gamma)=
 \left[ \begin {array}{cc} -i{\it \gamma}&1
 \\\noalign{\medskip}1&i{\it
 \gamma}
 \end {array} \right]\,
 \label{dopp2}
 \ee
(with the bound-state spectrum $E_\pm = \pm \sqrt{1-\gamma^2}$ and
EPs $\gamma^{(EP)}_\pm=\pm 1$) or, at $N=3$, by
 \be
H^{(3)}_{(CBH)}(\gamma)=\left[ \begin {array}{ccc} -2\,i\gamma&
\sqrt{2}&0\\\noalign{\medskip}\sqrt{2}&0&
\sqrt{2}\\\noalign{\medskip}0&\sqrt{2}&2\,i\gamma\end {array}
\right]\,
  \label{3wg}
 \ee
(with $E_0=0$, $E_\pm = \pm 2\,\sqrt{1-\gamma^2}$ and
$\gamma^{(EP)}_\pm =\pm 1$), etc. The key formal advantage of the
complex symmetric structure of these matrices has been found, in
Refs.~\cite{admis}, in a technically friendly nature of the
incorporation of perturbation corrections. In this framework the
authors of \cite{Uwe} studied the dynamical regime in which  $\gamma
\approx \gamma^{(EP)}$ and in which the bosons sit in a double-well
potential endowed with the respective sink- and source-simulated
additional couplings to an external continuum. In this language they
were really able to clarify certain specific features of the
Bose-Einstein condensation treated as an EP-related phase
transition.

\subsubsection{Physical inner product}

It is easy to show that at $c=0$ and all $N$, the real line of the
CBH parameter $\gamma$ splits into an open interval ${\cal
D}=(-1,1)$ (in which the spectrum is real and non-degenerate), the
complement $(-\infty,-1)\bigcup (1,\infty)$ of the closure of ${\cal
D}$ (in which the spectrum is not real), and the boundary $\partial
{\cal D}=\{-1,1\}$  formed by the Kato's exceptional points
$\gamma^{(EP)}_\pm = \pm 1$.
In our present paper we will exclusively pay attention to the
interior of ${\cal D}$, treating Hamiltonian (\ref{Uwemo}) as a
standard self-adjoint operator of a standard quantum observable.

In the spirit of Stone's theorem \cite{Stone} the latter operator
plays just the role of the generator of unitary evolution. After the
phase transition such a role will be rendered possible by the
replacement of the conventional ``friendly but false'' Hilbert space
${\cal H}^{(F)}$ (endowed with inner product $\br \cdot|\cdot \kt$)
by its ``standard'' physical amendment ${\cal H}^{(S)}$ in which the
Banach-space topology remains unchanged and only the correct inner
product is  different,
 $$
 \br \cdot|\cdot \kt\ \to \  \br \cdot|\Theta|\cdot \kt\,.
 $$
Once we are given {\em any\,} diagonalizable
Hamiltonian with real spectrum, the construction of the necessary
physical Hilbert space can be perceived, in our present
finite-dimensional cases at least, as equivalent to the construction
of the Hermitizing operator $\Theta=\Theta(H)$ via Dieudonn\'{e}'s
equation (\ref{requi}).
In such a unitary-evolution approach also every CBH matrix (\ref{dopp2}),
(\ref{3wg}) (etc) must be assigned its sophisticated inner product
rendering
this matrix Hermitian. Without an additional information about
dynamics, we may simply consider {\em any\,} set of the positive definite
solutions of the Dieudonn\'{e}'s Eq.~(\ref{requi}) and declare
any one of them ``physical''.

\subsubsection{Metrics at $N=2$}

The insertion of Hamiltonian (\ref{dopp2}) converts
Eq.~(\ref{requi}) into the definition of all of the eligible CBH
Hilbert-space metrics at $N=2$ \cite{45},
 \be
 \Theta^{(2)}(\beta)= I^{(2)} +\left[ \begin {array}{cc}
 0&\beta+i\,\gamma
 \\
 \noalign{\medskip}\beta-i\,\gamma&0
 \end {array} \right]\,,\ \ \ \ \ -\sqrt{1-\gamma^2}<\beta<
 \sqrt{1-\gamma^2}\,.
 \label{medopp2}
 \ee
The new free real parameter $\beta$ numbers the complete set of the
different physical inner products, i.e., the different classes of
the eligible observables $\Lambda=\Lambda^{(2)}_j(\beta)$ which must
be compatible with the same metric, i.e., which must be self-adjoint
in the same Hilbert space ${\cal H}^{(S)}$,
 \be
 \Lambda^\dagger_j(\beta) \Theta^{(2)}(\beta)=\Theta^{(2)}({\beta})
 \Lambda_j(\beta)\,, \ \ \ \ j = 1, 2, \ldots\,, J\,.
 \,.
  \label{requiem2}
 \ee
Naturally, we must satisfy conditions (\ref{requiem2}) even if we
decide to pre-select the physics-dictated observables
$\Lambda^{(2)}_j$ in advance. Nevertheless, the metric $\Theta$
compatible with all of them need not then exist at all
\cite{arabky}.

In the conventional, Hermitian quantum mechanics the ambiguity of
the assignment $H \to \Theta(H)$ is ignored. The ``formally
optimal'' choice of trivial $\Theta=I$ is practically never put
under questionmark. The situation is different for the non-Hermitian
Hamiltonians with real energy spectra because it is much less
obvious which one of the available Hermitizations based on the
necessarily nontrivial inner product is ``optimal''. One could, for
example, follow our recent recommendation \cite{lotor} and require
that the difference $\Theta - I $ should be kept, in some sense,
minimal.

This may be also required at $N=2$. The most natural measure of the
difference between $\Theta$ and the unit operator $I$ may be then
based on our knowledge of all of the eigenvalues of the general CBH
metric,
 \be
 \theta_{\pm}^{(2)}(\beta)= 1\pm
 \sqrt{\beta^2+\gamma^2}\,.
 \label{alleig}
 \ee
The difference between these eigenvalues specifies, in the language
of geometry, the extent of anisotropy in the vector space
$\mathbb{C}^{2}(\Theta)$. The ``minimal anisotropy choice'' would be
unique, achieved at constant $\beta=0$. This value is also
sufficient and acceptable in the whole interval of $\gamma \in
(0,1)$ [or rather of $\gamma \in (-1,1)$] in which the energies
remain real and, hence, potentially observable.

\section{Unitarity of evolution after the phase
transition ($N=3$)\label{uukol}}


\subsection{Matrices of the metric at small $\gamma$}

After we abbreviate $G=\sqrt{2}\,\gamma$ we may insert
Eq.~(\ref{3wg}), together with a general ansatz for $\Theta$, in
Eq.~(\ref{requi}). The two-parametric
solution is immediate \cite{45},
 \be
 \Theta^{(3)}(\beta,\delta)= I^{(3)} +\left[ \begin {array}{ccc}
  0&\beta+i\,G&
 \delta+i\,G\,\beta\\
 \noalign{\medskip}\beta-i\,G&\delta+{G}^{2}&\beta+i\,G\\
 \noalign{\medskip}\delta-i\,G\,\beta&\beta-i\,G&0\end {array}
 \right]\,.
  \label{me3wg}
 \ee
The guarantee of the positive definiteness does have a purely
algebraic form, in principle at least \cite{positivity}.
In practice the localization of the boundaries of
the domain of positivity of the metric-operator roots remains
prohibitively complicated because the underlying secular polynomial
contains as many as 21 separate terms.

For our present purposes it will be sufficient to construct the
metric at small $\gamma$. We will only have to require that the
onset of the non-Hermiticity should be smooth, i.e., that the metric
should not differ too much from the unit matrix. The inspection of
Eq.~(\ref{me3wg}) reveals that the free parameters $\beta$ and
$\delta$ should be then small.

Once we start
from $\beta=\delta=0$ we get the exact eigenvalues of $\Theta$ in
closed form. Besides the constant $\theta_0=1$, the
other two roots
 \be
  \theta_\pm =
 1\pm \frac{1}{2}\,\sqrt
 {8\,{G}^{2}+{G}^{4}}+\frac{1}{2}\,{G}^{2}
 = 1 \pm \gamma \sqrt{4+\gamma^2})+\gamma^2\,,
 \ \ \ \  -1/\sqrt{2}<\gamma < 1/\sqrt{2}
 \,
 \label{someig}
 \ee
of the underlying secular polynomial are $\gamma-$dependent and
positive. In the weakly non-Hermitian dynamical
regime, all of the eigenvalues of the most general metrics
(\ref{me3wg}) remain perturbatively close to one.

The simplest metric with trivial $\beta=\delta=0$ ceases to
be invertible at the
reasonably large value of
$\gamma_{critical}(\beta,\delta)=1/\sqrt{2} \approx 0.707$.

\subsection{Corrections}

In the search for corrections let us set $\theta=1+s$ and get the
polynomial secular equation
 $$
 0={s}^{3}-{\it {\delta}}\,{s}^{2}+ \left( -2\,{{\it
 {\beta}}}^{2}-4\,{{\it
\gamma}}^{2} -{{\it {\delta}}}^{2} \right) s
 - 2\,{\it {\delta}}\,{{\it
{\beta}}}^{2}+4\,{\it {\delta}}\,{{ \it \gamma}}^{2}+{{\it
{\delta}}}^{3}\,
 $$
in which we omitted all of the fourth- and higher-order terms. This
polynomial is quadratic in some parameters making the approximate
solutions expressible in the elementary implicit-function forms
$\gamma=\gamma(s,\beta,\ldots)$ or $\beta=\beta(s,\gamma,\ldots)$.
Although the resulting formulae are slightly tedious, their
graphical and/or Taylor-series analysis remains straightforward. At
$\delta=0$ one obtains, in particular, the same elementary
parameter-dependence as above,
 $$
 s_0=0\,,\ \ \ \ \ \
 s_\pm = \pm \sqrt{2\,\beta^2+4\,\gamma^2}\,.
 $$
The message delivered by such an
analysis is encouraging because at the higher
matrix dimensions we may expect that

\begin{itemize}

\item
the candidates for the metric may be written in the form of the unit
matrix plus corrections,

\item
the correction matrix may be sought in the form which is complex symmetric
with respect to the second diagonal.

\end{itemize}


\subsection{The metric covering the whole range of admissible
$\gamma$}

In contrast to the preceding $N=2$ model our $\beta=\delta=0$
formula (\ref{someig}) fails to hold up to the EP-supremum
of $\gamma \to \gamma^{(EP)}=1$. The failure may be attributed to
the trivial choice of parameter $\delta$. This may be tested at
$\gamma=\gamma_{critical}(0,0)=1/\sqrt{2}$. At this value of $\gamma$,
our
two-parametric family of metric candidates
 $$
 \Theta_{\gamma=1/\sqrt{2}}(\beta,\delta)=
 \left[ \begin {array}{ccc} 1&{\it \beta}+i&{\it \delta}+i{\it
 \beta}
 \\ \noalign{\medskip}{\it \beta}-i&{\it \delta}+2&{\it \beta}+i
 \\ \noalign{\medskip}{\it \delta}-i{\it \beta}&{\it \beta}-i&1\end
 {array}
 \right]\,
  $$
may be assigned the secular polynomial
 $$
 P_{\gamma=1/\sqrt{2}}(s)={s}^{3}+ \left( -4-{\it \delta} \right)
 {s}^{2}+
 \left( 3-{{\it \delta}}^{2}-3 \,{{\it \beta}}^{2}+2\,{\it \delta}
 \right)
 s-{\it \delta}\,{{\it \beta}}^{2}+{{\it \delta}}^{3}+{\it
 \delta}+2\,{{\it
 \delta}}^{2}\,
 $$
for which all of the roots may happen to be positive at
a nontrivial $\delta$.

After the omission of the higher, subdominant powers of parameters
the localization of the amended
$\delta$s degenerates to the simplified secular equation
 $$
 0=
 {s}^{3}+ \left( -4-{\it \delta} \right) {s}^{2}+ \left( 3+2\,{\it
 \delta}
 \right) s+{\it \delta}
 $$
which, incidentally, does not depend on $\beta$ at all.
Numerically we localized its
three roots
which remained {\em positive\,}
inside a finite interval of negative
$\delta \gtrapprox -0.382478$.
On this ground one can
extend the admissible interval of $\gamma$ at the expense of using
a tentative and, presumably, negative $\gamma-$dependent function
$\delta=\delta(\gamma)$.

In the most elementary implementation of such an idea let us set
$\beta=0$ and $\delta=-\gamma$. This leads to a new
$\gamma-$dependent metric candidate
 $$
 \Theta^{(3)}(0,-\gamma)
 =\left[ \begin {array}{ccc} 1&i\sqrt {2}{\it \gamma}&-{\it \gamma}
 \\
 \noalign{\medskip}-i\sqrt {2}{\it \gamma}&
 1-{\it \gamma}+2\,{{\it \gamma}}^{2}&
 i\sqrt {2}{\it \gamma}\\\noalign{\medskip}-{\it \gamma}&-i\sqrt
 {2}{\it \gamma}&1
 \end {array} \right]
 $$
with the three (exact) eigenvalues
 $$
 \theta_1=1-\gamma\,,
 \ \ \
 \theta_{2,3}=
 1\pm \sqrt {5\,{{\it \gamma}}^{2}+{{\it \gamma}}^{4}-2\,{{\it
 \gamma}}
^{3}}+{{\it \gamma}}^{2}\,.
 $$
These eigenvalues are, obviously, positive in the whole (open)
interval of  physically relevant CBH parameters $\gamma \in
(0,\gamma^{(EP)})$ with $\gamma^{(EP)}=1$. The related metric
remains acceptable, therefore, up to the maximally non-Hermitian
domain of $\gamma \lessapprox \gamma^{(EP)}$.

The amended construction introduces a more pronounced anisotropy in
the ``standard'', metric-dependent physical Hilbert space ${\cal
H}^{(S)}$ at the smallest $\gamma$s. Thus, the amendment is less
suitable for the study of the phase-transition onset of the
non-Hermiticity at $\gamma \approx 0$.

\section{Constructions of metric at $N=4$\label{uxkol}}

The wisdom gained via the $N=2$ and $N=3$ results
(\ref{medopp2}) and (\ref{me3wg}) is that it might make sense to
search for the general $N$ by $N$ metrics $\Theta^{(N)}(\gamma)$ in
the form which remains complex symmetric with respect to its
second diagonal.

\subsection{Nine-parametric ansatz}

The first nontrivial $N=4$ CBH Hamiltonian
 \be
 H^{(4)}_{(CBH)}(\gamma)= \left[ \begin {array}{cccc}
 -3\,i\gamma&\sqrt{3}&0&0
 \\\noalign{\medskip}\sqrt {3}&-i\gamma&2&0
 \\\noalign{\medskip}0&2&i\gamma&\sqrt {3}
 \\\noalign{\medskip}0&0&\sqrt {3}&3\,i\gamma\end {array}
 \right]\,
 \label{tripa}
 \ee
yields the bound-state energies $E_{\pm,\pm} = \pm (2 \pm
1)\,\sqrt{1-\gamma^2}$ which are real inside a finite interval
bounded by its exceptional-point boundaries $\gamma^{(EP)}_\pm =\pm
1$. We must
find now such a ``standard'' Hilbert space ${\cal H}^{(S)}$ in which the
evolution generated by our ``closed-system'' Hamiltonian
$H^{(4)}_{(CBH)}(\gamma)$
would be unitary.
Thus, we must find at least one invertible
and Hermitian matrix
 \be
 \Theta^{(4)}_{(CBH)}(\gamma)=
 \left[ \begin {array}{cccc} 1&{\it {\beta}}+ix&
 {\it {\delta}}+iz&{\it {\kappa}}
 +iy\\\noalign{\medskip}{\it {\beta}}-ix&{\it {\rho}}&{\it
 {\tau}}+iu&{\it
 {\delta}}+iz
 \\
 \noalign{\medskip}{\it {\delta}}-iz&{\it {\tau}}-iu&
 {\it {\rho}}&{\it {\beta}}+ix
 \\
 \noalign{\medskip}{\it {\kappa}}-iy&{\it {\delta}}-iz&{\it
 {\beta}}-ix&1
 \end {array} \right]\,
 \label{anzam4g}
 \ee
compatible with the Dieudonn\'{e}'s hidden-Hermiticity constraint
(\ref{requi}) in an interval of $\gamma \in (0,\gamma^{(4)}_{max})$.

\begin{lemma}
Up to an arbitrary overall multiplication factor, ansatz
(\ref{anzam4g}) leads to the fully general CBH solution of
Dieudonn\'e's Eq.~(\ref{requi}) at $N=4$.
\end{lemma}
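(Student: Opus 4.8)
The plan is to bypass a brute-force solution of (\ref{requi}) and instead use the two structural symmetries of the CBH Hamiltonian to determine the entire solution space abstractly, matching it against the nine-parameter ansatz only at the end. First I would record two elementary facts about $H:=H^{(4)}_{(CBH)}(\gamma)$ of (\ref{tripa}). It is complex symmetric, $H=H^{T}$, so $H^{\dagger}=\overline{H}$ and the Dieudonn\'e equation (\ref{requi}) reads simply $\overline{H}\,\Theta=\Theta H$. And, writing $R$ for the $4\times4$ reversal (parity) matrix with $R^{2}=I$ and $R^{T}=R$, one checks by direct inspection that $R\,H\,R=\overline{H}$; i.e. $H$ is $\mathcal{PT}$-symmetric with $\mathcal{P}=R$ and $\mathcal{T}$ the complex conjugation. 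Substituting $\overline{H}=RHR$ into $\overline{H}\,\Theta=\Theta H$ and left-multiplying by $R$ yields $H\,(R\Theta)=(R\Theta)\,H$, so that $R\Theta$ lies in the commutant of $H$.

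Next I would invoke solvability. In the physical window $\gamma\in(0,1)$ the four energies $\pm(2\pm1)\sqrt{1-\gamma^{2}}$ are real and pairwise distinct, so $H$ is diagonalizable with simple spectrum and its commutant is exactly the polynomial algebra $\{\,p(H):\deg p\le 3\,\}$, of complex dimension $N=4$. Hence every solution of (\ref{requi}) has the closed form $\Theta=R\,p(H)$. Two consequences follow at once. Because $H=H^{T}$ we have $p(H)^{T}=p(H)$, whence $\Theta^{T}=p(H)R$ and $R\,\Theta^{T}R=R\,p(H)=\Theta$; that is, every solution is automatically symmetric about the second diagonal, so imposing that structure loses nothing. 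Imposing in addition the Hermiticity $\Theta=\Theta^{\dagger}$ demanded of a metric cuts the complex four-dimensional space to a real four-dimensional one, which is precisely the manifestly Hermitian, persymmetric content of (\ref{anzam4g}); fixing its corner entry to $1$ (i.e. modulo the overall multiplicative factor named in the statement) leaves $N-1=3$ genuinely free real parameters, consistent with the one free parameter at $N=2$ in (\ref{medopp2}) and the two at $N=3$ in (\ref{me3wg}).

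Finally I would close the loop computationally: insert (\ref{anzam4g}) into (\ref{requi}), collect the resulting homogeneous linear relations among $(\beta,x,\delta,z,\kappa,y,\rho,\tau,u)$, and verify that they express six of the nine entries as ($\gamma$-dependent) functions of the remaining three, with no further constraint. Matching this three-parameter outcome against the dimension count above certifies both that the ansatz yields only genuine solutions (automatic, by construction) and that it omits none, which is exactly the assertion of the lemma.

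The main obstacle I expect is the dimension bound itself: establishing that the commutant is no larger than the polynomial algebra rests entirely on the non-degeneracy of the four eigenvalues throughout $(0,1)$, since any accidental degeneracy would enlarge the commutant and could in principle admit solutions outside the persymmetric family. A secondary bookkeeping hazard is confirming that Hermiticity together with the second-diagonal symmetry leave exactly three free real parameters once (\ref{requi}) is imposed, neither over- nor under-constraining; this is the point at which the structural argument and the explicit linear system must be checked to agree.
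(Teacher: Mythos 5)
Your proposal is correct, and it takes a genuinely different route from the paper. The paper's own proof is the direct computation you relegate to your final ``close the loop'' step: it inserts the nine-parametric persymmetric Hermitian ansatz (\ref{anzam4g}) into (\ref{requi}) and reads off, row by row, the explicit formulae expressing $\rho,x,z,u,y,\tau$ in terms of the three surviving parameters $\beta,\delta,\kappa$; the claim of full generality is then asserted rather than derived, since nothing in that computation rules out solutions lying outside the persymmetric ansatz. Your structural argument supplies exactly that missing piece: from $H=H^{T}$ and $RHR=\overline{H}$ you reduce (\ref{requi}) to the statement that $R\Theta$ lies in the commutant of $H$, and the simplicity of the spectrum for $\gamma\in(0,1)$ (indeed for all $\gamma\neq\pm1$) bounds that commutant by the degree-$\le 3$ polynomial algebra, of complex dimension $4$; persymmetry of every solution and the real dimension $4$ of the Hermitian subspace (hence $3$ parameters after normalization) then follow, confirming that the ansatz omits nothing. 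What the paper's approach buys is the explicit closed-form entries of $\Theta^{(4)}$, which are used in the subsequent sections; what yours buys is an actual proof of maximality and an argument that extrapolates cleanly to arbitrary $N$ (it is, in effect, a sharper version of the paper's Theorem~2). Two minor points you should make explicit: the normalization $\Theta_{11}=1$ silently assumes a nonvanishing corner entry, which is harmless here only because positive-definite metrics are the ultimate target; and the claim that Hermiticity halves the real dimension deserves the one-line remark that $\Theta\mapsto\Theta^{\dagger}$ preserves the solution space and that the anti-Hermitian solutions are exactly $i$ times the Hermitian ones.
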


\begin{proof}
Elementary linear algebra converts Eq.~(\ref{requi})  into the set
of definitions of the real diagonal element
 $$
{\rho}=1/3\, \left( 4\,{{\it \gamma}}^{2}\sqrt {3}
+\sqrt {3}+2\,{\it
{\delta}} \right) \sqrt {3} = 1+4\,{{\it \gamma}}^{2}+2{\it
{\delta}}/\sqrt {3}\,
 $$
as well as of all of the imaginary parts of the matrix elements,
 $$
x=\sqrt {3}{\it \gamma}\,,\ \ \ \
z=2\,{\it \gamma}\,{\it {\beta}}\,,\ \ \ \
 u=2\,{{\it \gamma}}^{3}+2\,{\it \gamma}
 +{\it \gamma}\,{\it {\delta}}/\sqrt {3}\,,
 \ \ \ \
 y=
 {\it \gamma}\, \left( 2\,{{\it \gamma}}^{2}+
\sqrt {3}\,{\it {\delta}}
 \right)\,.
 $$
The last Dieudonn\'e's constraint
 $$
 {\tau}= \left( 4\,{{\it \gamma}}^{2}{\it {\beta}}
 +2\,{\it {\beta}}+\sqrt {3}{\it {\kappa}}
 \right)/ \sqrt {3}\,
 $$
defines the last redundant real parameter so that
Eq.~(\ref{anzam4g}) acquires the status of the fully general
three-parametric  definition of the candidates
$\Theta^{(4)}(\gamma)=\Theta^{(4)}(\gamma,\beta,\delta,\kappa)$ for
the metric.
\end{proof}

 \noindent
The latter observation can be extrapolated to any matrix dimension
$N$.

\begin{thm}
For the matrix form of the $c=0$ CBS Hamiltonians (\ref{Uwemo}) of
any dimension $N$ the Dieudonn'{e}'s Eq.~(\ref{requi}) defines,
recurrently, all of the matrix elements of all of the eligible
(though not yet necessarily positive definite) Hamiltonian-dependent
metric candidates $\Theta=\Theta(H)$ in terms of the freely variable
$N-$plet of the real parts of the elements in the first row.
\end{thm}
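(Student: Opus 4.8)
The plan is to read the Dieudonn\'e equation (\ref{requi}) in components, exploiting that every matrix $H^{(N)}_{(CBH)}(\gamma)$ is symmetric and tridiagonal, with purely imaginary diagonal entries $H_{jj}={\rm i}\,g_j$ (so that $H^\dagger=\overline{H}$) and with real, strictly nonzero off-diagonal $su(2)$ amplitudes $H_{j,j\pm1}=b_j\neq 0$, exactly as sampled by (\ref{dopp2}), (\ref{3wg}) and (\ref{tripa}). First I would write out $(\overline{H}\,\Theta)_{jl}=(\Theta H)_{jl}$ and solve it for the lowest newly occurring entry, obtaining the three-term recurrence
\be
b_j\,\theta_{j+1,l}={\rm i}\,(g_j+g_l)\,\theta_{j,l}+b_{l-1}\,\theta_{j,l-1}+b_l\,\theta_{j,l+1}-b_{j-1}\,\theta_{j-1,l}\,
\ee
(with out-of-range terms set to zero). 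Since $b_j\neq 0$ throughout the physical interval, this determines the whole of row $j+1$ from rows $j$ and $j-1$; running it forward in $j$ therefore reconstructs every matrix element of $\Theta$ from the single first row $\theta_{1,1},\dots,\theta_{1,N}$. This is precisely the mechanism already visible in the Lemma, where at $N=4$ the relations for $\rho,x,z,u,y,\tau$ are nothing but the first instances of the recurrence.

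It remains to see why only the $N$ real parts of that first row survive as free data. Here I would impose Hermiticity $\Theta=\Theta^\dagger$ on the reconstructed matrix. Combining the diagonal ($l=j$) component of (\ref{requi}) with the relations $\theta_{j,j-1}=\overline{\theta_{j-1,j}}$ and $\theta_{jj}=\overline{\theta_{jj}}$ collapses it to the purely real recurrence
\be
b_j\,{\rm Im}\,\theta_{j,j+1}=b_{j-1}\,{\rm Im}\,\theta_{j-1,j}-g_j\,\theta_{jj}\,,
\ee
whose base case $b_1\,{\rm Im}\,\theta_{12}=-g_1\,\theta_{11}$ already fixes ${\rm Im}\,\theta_{12}$ in terms of the real number $\theta_{11}$ (reproducing $x=\sqrt3\,\gamma$ at $N=4$). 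Iterating this together with the Hermiticity identifications $\theta_{l,1}=\overline{\theta_{1,l}}$ on the remaining super-diagonals expresses every imaginary part ${\rm Im}\,\theta_{1,l}$, and in turn every interior real part such as $\rho$ and $\tau$, as an explicit linear function of the real parts ${\rm Re}\,\theta_{1,1},\dots,{\rm Re}\,\theta_{1,N}$ and of $\gamma$, with those $N$ real parts left unconstrained. I would also record the persymmetry $\theta_{jl}=\theta_{N+1-l,\,N+1-j}$, inherited from the identity $J H J=\overline{H}$ for the reversal matrix $J$; this is the ``complex symmetry with respect to the second diagonal'' built into ansatz (\ref{anzam4g}), and it halves the bookkeeping.

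The hard part is not the recurrence itself but the verification that Hermiticity imposes \emph{exactly} $N$ independent conditions, i.e.\ that fixing the $N$ imaginary first-row parts exhausts the constraints and that the Dieudonn\'e equations not used as definitions (the last-row components) are then automatically satisfied, with no hidden over-determination. The clean way to settle this is a dimension count. For $\gamma\in(-1,1)$ the spectrum of $H^{(N)}_{(CBH)}(\gamma)$ is real and non-degenerate (as in the displayed $N=2,3,4$ spectra), so $H=SDS^{-1}$ with $D$ real and diagonal with distinct entries; writing $\widetilde\Theta=S^\dagger\Theta S$, Eq.~(\ref{requi}) becomes $D\widetilde\Theta=\widetilde\Theta D$, forcing $\widetilde\Theta$ to be diagonal, and Hermiticity of $\Theta$ forces its entries to be real. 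Hence the Hermitian solutions of (\ref{requi}) form an $N$-real-parameter family $\Theta=\sum_{n}c_n\,w_n w_n^\dagger$, $c_n\in\mathbb{R}$, with $w_n$ the columns of $(S^\dagger)^{-1}$. Since the forward recurrence shows the first row already determines $\Theta$, the linear map sending a Hermitian solution to its $N$ real first-row parts goes between two spaces of equal real dimension $N$; checking that it is injective---a first row with vanishing real parts is forced to vanish completely by the reductions of the previous paragraph, whence the forward recurrence gives $\Theta=0$---upgrades it to an isomorphism, which is the assertion of the Theorem.
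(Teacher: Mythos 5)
Your proposal is correct, and its first paragraph is essentially the paper's entire proof made explicit: the paper merely observes that tridiagonality of $H^{(N)}_{(CBH)}$ lets Eq.~(\ref{requi}) be rearranged row-wise into recurrences and defers the details to Ref.~\cite{rekur}, whereas you actually display the three-term recurrence $b_j\theta_{j+1,l}={\rm i}(g_j+g_l)\theta_{j,l}+b_{l-1}\theta_{j,l-1}+b_l\theta_{j,l+1}-b_{j-1}\theta_{j-1,l}$ and note that $b_j\neq 0$ makes the forward sweep well defined. Where you genuinely go beyond the paper is in justifying the ``exactly $N$ freely variable real parts'' clause: the paper never argues that the Hermiticity constraints fix precisely the $N$ imaginary first-row parts, nor that the unused (last-row) components of Eq.~(\ref{requi}) are automatically satisfied. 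Your diagonalization argument ($D\widetilde\Theta=\widetilde\Theta D$ with distinct real eigenvalues forces $\widetilde\Theta$ diagonal and real, giving an $N$-real-parameter family $\sum_n c_n w_nw_n^\dagger$), combined with injectivity of the first-row map, settles both points cleanly and buys a rigorous dimension count that the recurrence alone cannot provide. Two minor caveats: the spectral argument needs $\gamma$ strictly inside ${\cal D}=(-1,1)$ so that the eigenvalues are real and non-degenerate (which is the regime the paper works in anyway), and your middle paragraph's claim that iterating the diagonal relation ``expresses every imaginary part'' is stated loosely --- but the final isomorphism argument makes that looseness harmless, since equality of dimensions plus injectivity does the work. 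Your derived relation $b_j\,{\rm Im}\,\theta_{j,j+1}=b_{j-1}\,{\rm Im}\,\theta_{j-1,j}-g_j\theta_{jj}$ checks out against the $N=4$ Lemma ($x=\sqrt3\,\gamma$ from $g_1=-3\gamma$, $b_1=\sqrt3$, $\theta_{11}=1$).
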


\begin{proof}
For the proof it is sufficient to realize that as long as our
Hamiltonians are tridiagonal, the Dieudonn\'e's Eq.~(\ref{requi})
can be rearranged, row-wise, as a set of recurrences. A fully
detailed account of this idea and the explicit form of the
arrangement of the recurrences in an analogous real-matrix case may
be found in Ref.~\cite{rekur}.
\end{proof}

\subsection{Zero-parametric ansatz for the metric at $N=4$}

When we compare the $(N-1)-$parametric candidates for the metrics as
obtained at $N=2$ [cf. Eq.~(\ref{medopp2})] and at $N=3$ [cf.
Eq.~(\ref{me3wg})] we notice that they are positive definite at the
vanishing parameters (cf. $\beta\to 0$ at $N=2$ in
Eq~(\ref{alleig}), or $\beta\to 0$ and $\delta \to 0$ at $N=3$ in
Eq~(\ref{someig}). Such a regularity feature survives also at $N=4$:
for the proof we merely select ${\beta}={\delta}={\kappa}=0$ and
check.

\begin{lemma}
For ${\beta}={\delta}={\kappa}=0$ and for the sufficiently small
$\gamma$, all of the four eigenvalues of matrix (\ref{anzam4g}) will
be positive and will have the following leading-order form
 \be
 \label{resu}
 \theta_{\pm,1}=1\pm \gamma+O \left( {{\it \gamma}}^{2}
 \right)\,,
 \ \ \ \  \theta_{\pm,2}=1\pm 3\,\gamma+O \left( {{\it \gamma}}^{2}
 \right)\,.
 \ee
\end{lemma}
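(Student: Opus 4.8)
The plan is to turn the eigenvalue question into a single first-order perturbation computation around $\gamma=0$, where the metric collapses to the identity. First I would substitute $\beta=\delta=\kappa=0$ into the explicit solution of Dieudonn\'e's Eq.~(\ref{requi}) constructed above, which gives $x=\sqrt{3}\,\gamma$, $z=0$, $u=2\gamma+2\gamma^{3}$, $y=2\gamma^{3}$, $\rho=1+4\gamma^{2}$ and $\tau=0$. Inserting these values into the ansatz (\ref{anzam4g}) produces a fully explicit Hermitian matrix $\Theta^{(4)}(\gamma)$ all of whose off-diagonal entries vanish at $\gamma=0$, so that $\Theta^{(4)}(0)=I^{(4)}$.

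The key structural observation is then that $\Theta^{(4)}(\gamma)=I^{(4)}+\gamma\,M_{1}+O(\gamma^{2})$, where $M_{1}$ is the Hermitian tridiagonal matrix with zero diagonal and superdiagonal entries $i\sqrt{3},\,2i,\,i\sqrt{3}$ (its subdiagonal being the complex conjugate). Because the unperturbed matrix is a scalar multiple of the identity, the whole space $\mathbb{C}^{4}$ is its fourfold degenerate eigenspace, and the first-order splitting of the eigenvalue $1$ is governed by the \emph{full} matrix $M_{1}$. Since $\Theta^{(4)}(\gamma)$ is Hermitian and analytic in $\gamma$, its eigenvalue branches are analytic near $\gamma=0$ (Rellich), each of the form $\theta(\gamma)=1+\lambda\,\gamma+O(\gamma^{2})$ with $\lambda$ ranging over the spectrum of $M_{1}$. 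Everything therefore reduces to computing that spectrum.

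The cleanest way to do this exploits the $su(2)$ background used throughout the paper. A diagonal phase (gauge) similarity $U=\mathrm{diag}(1,i,-1,-i)$ strips the factor $i$ from the off-diagonal entries of $M_{1}$ and leaves the real symmetric tridiagonal matrix with superdiagonal $\sqrt{3},2,\sqrt{3}$, which is exactly $2L_{x}$ in the $N=4$ (spin $3/2$) irreducible representation; its eigenvalues are the values $2m$ with $m\in\{-3/2,-1/2,1/2,3/2\}$, i.e. $\{-3,-1,1,3\}$. Equivalently, writing $M_{1}=iA$ with $A$ real and skew-symmetric, the characteristic polynomial of $A$ factors as $(\mu^{2}+1)(\mu^{2}+9)$, giving $A$-eigenvalues $\pm i,\pm 3i$ and hence $M_{1}$-eigenvalues $\pm1,\pm3$. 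Substituting these four values of $\lambda$ into $\theta(\gamma)=1+\lambda\gamma+O(\gamma^{2})$ yields precisely the claimed expansions $\theta_{\pm,1}=1\pm\gamma+O(\gamma^{2})$ and $\theta_{\pm,2}=1\pm3\gamma+O(\gamma^{2})$.

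Positivity is then immediate: all four eigenvalue branches tend to $1>0$ as $\gamma\to0$, so by continuity they stay strictly positive on a sufficiently small interval $\gamma\in(0,\gamma_{0})$, which is exactly what is asserted. I expect no genuine obstacle here; the only point requiring care is the degenerate nature of the $\gamma=0$ limit, which is what forces one to diagonalize the full linear coefficient $M_{1}$ rather than to read corrections off individual matrix entries. The distinctness of the four slopes $\pm1,\pm3$ guarantees that the degeneracy is fully lifted already at first order, so the expansion is unambiguous.
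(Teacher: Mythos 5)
Your proof is correct, but it takes a genuinely different route from the paper's. The paper proves the lemma by brute force: it writes down the explicit zero-parameter matrix $\Theta^{(4)}(\gamma,0,0,0)$ of Eq.~(\ref{beznich}), computes its four eigenvalues in exact closed form (the quartic secular polynomial factors into two quadratics in $\theta$), and Taylor-expands the resulting square roots. You instead linearize first, writing $\Theta^{(4)}=I+\gamma\,M_{1}+O(\gamma^{2})$, and reduce everything to the spectrum $\{\pm 1,\pm 3\}$ of the Hermitian first-order coefficient $M_{1}$, which you obtain by gauging away the phases with $U=\mathrm{diag}(1,i,-1,-i)$ so as to land on $2L_{x}$ in the spin-$3/2$ representation. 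Both arguments are sound; the degenerate-perturbation subtlety you flag (the unperturbed matrix is a multiple of the identity, so the whole of $M_{1}$ must be diagonalized) is handled correctly, and for a Hermitian family one can even bypass Rellich via a Weyl-type bound $|\theta_{j}(\Theta^{(4)})-\theta_{j}(I+\gamma M_{1})|\le\|O(\gamma^{2})\|$. What the paper's exact computation buys is more information: the second-order coefficients (the eigenvalues behave as $1\pm\gamma+\gamma^{2}$ and $1\pm 3\gamma+3\gamma^{2}$) and, further on, the exact positivity radius $\gamma_{critical}=1/\sqrt{2}$. What your argument buys is structure and generality: it is essentially the content of the paper's subsequent remark on the reduced secular equation $s^{4}-10\gamma^{2}s^{2}+9\gamma^{4}=0$ and, more importantly, it anticipates Theorem~\ref{[3]} and Corollary~\ref{[5]}, since the identification $M_{1}=-2L_{y}^{(N)}$ and the resulting equidistant first-order spectrum work at every $N$.
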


\begin{proof}
The leading-order minimization (\ref{resu}) of the physical
Hilbert-space anisotropy finds its immediate inspiration in the
general recipe of Ref.~\cite{lotor}. For our particular model the
formula can be obtained directly from the reduced
${\beta}={\delta}={\kappa}=0$ version of Eq.~(\ref{anzam4g}),
 \be
 \Theta^{(4)}(\gamma,0,0,0)=
 \left[ \begin {array}{cccc}
 1&i\sqrt {3}{\it \gamma}&0&2\,i{{\it \gamma}}^{3}\\
 \noalign{\medskip}-i\sqrt {3}{\it \gamma}&4\,{{\it \gamma}}^{2}+1
 &2\,i{\it \gamma}\,
  \left( {{\it \gamma}}^{2}+1 \right) &0\\
 \noalign{\medskip}0&-2\,i{\it \gamma}\, \left( {{\it \gamma}}^{2}+1
 \right)
  &4\,{{\it \gamma}}^{2}+1&i\sqrt {3}{\it \gamma}\\
\noalign{\medskip}-2\,i{{\it \gamma}}^{3}&0& -i\sqrt {3}{\it
\gamma}&1
\end {array} \right]\,.
 \label{beznich}
 \ee
Elementary algebra yields the four eigenvalues $\theta_j$ in the
respective closed forms
 $$
1+2\,{{\it \gamma}}^{2}-{\it \gamma} \pm 2\,\sqrt {{{\it
\gamma}}^{2}+2\,{{\it \gamma}}^{4 }-{{\it \gamma}}^{3}+{{\it
\gamma}}^{6}-2\,{{\it \gamma}}^{5}}\,,
 $$
and
 $$
 1+{\it \gamma}+2\,{{\it \gamma}}^{2}\pm 2\,\sqrt {{{\it
 \gamma}}^{2}
 +{{\it \gamma}}^{3}+2
\,{{\it \gamma}}^{4}+{{\it \gamma}}^{6}+2\,{{\it \gamma}}^{5}}
 $$
which can be Taylor-expanded,
 $$
 \theta_1=1+{\it \gamma}+{{\it \gamma}}^{2}
 +O \left( {{\it \gamma}}^{3}
 \right)\,,\ \ \ \ %
 \theta_2=1-3\,{\it \gamma}+3\,{{\it \gamma}}^{2}
+O \left( {{\it \gamma}}^{3 } \right) \,,
 $$
 $$
 \theta_3=1+3\,{\it \gamma}+3\,{{\it \gamma}}^{2}
+O \left( {{\it \gamma}}^{3 } \right) \,,\ \ \ \ %
 \theta_4=1-{\it \gamma}+{{\it \gamma}}^{2}
+O \left( {{\it \gamma}}^{3}
 \right)\,.
$$
\end{proof}
 \noindent
The low-order terms may be checked to stay unchanged when we omit
the higher-order terms directly from the matrix $\Theta$. After such
a simplification the secular equation yields the same leading-order
roots in the closed linear form (\ref{resu}) more quickly. Such a
simplification proves useful at the higher $N$.

At $N=4$ the confirmation of this trick may rely on the split of
eigenvalues $\theta_j=1+s_j$ and on the search of the roots $s_j$ of
the rearranged secular equation
 \be
 0={s}^{4}-8\,{s}^{3}{{\it \gamma}}^{2}+ \left( -8\,{{\it
 \gamma}}^{6}-10\,{{\it \gamma}}^{2}+8\,{{\it \gamma}}^{4} \right)
 {s}^{2}+ \left( 32\,{{\it \gamma}}^{8}+ 24\,{{\it \gamma}}^{4}
 \right) r+24\,{{\it \gamma}}^{6}+9\,{{\it \gamma}}^{4}+40\, {{\it
 \gamma}}^{8}-32\,{{\it \gamma}}^{10}+16\,{{\it \gamma}}^{12}\,.
 \label{secuture}
 \ee
Under the assumption that  the  roots $s_j$ are all small for small
$\gamma$s, we may now order the coefficients and keep just the
leading-order terms. Eq.~(\ref{secuture}) then degenerates to the
solvable problem $0=s^4-10\,\gamma^2s^2+9\gamma^4$ possessing the
same leading-order roots as above, with $s_{\pm,1} =\pm \gamma +
\ldots$ and $s_{\pm,2} =\pm 3\gamma + \ldots$.

On the basis of such an experience one could recognize a
re-emergence of angular-momentum matrices $L_{x,y,z}^{(N)}$ and
prove the following general$-N$ result.

\begin{thm}
\label{[3]} Up to the first order in $\gamma$ one of the simplest
physical CBH metrics may be given, at any $N$, the closed form
 \be
 \Theta^{(N)}(\gamma)=I- 2\,\gamma\, L_y^{(N)}
 +{\cal O}(\gamma^2)\,.
 \label{obecm}
 \ee
\end{thm}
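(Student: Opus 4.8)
The plan is to solve Dieudonn\'e's equation (\ref{requi}) perturbatively in the small parameter $\gamma$. First I would split the $c=0$, $v=1$ Hamiltonian (\ref{Uwemo}) into its Hermitian and anti-Hermitian parts, writing $H = 2L_x - 2i\gamma L_z = H_0 + \gamma H_1$ with $H_0 = 2L_x = H_0^\dagger$ and $H_1 = -2i L_z = -H_1^\dagger$. In parallel I would expand the unknown metric as $\Theta = I + \gamma\,\Theta_1 + \mathcal{O}(\gamma^2)$, demanding Hermiticity $\Theta_1 = \Theta_1^\dagger$ and respecting the smoothness constraint $\Theta(0) = I$ that defines the phase transition.

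Inserting both expansions into (\ref{requi}) and collecting powers of $\gamma$, the zeroth order gives the empty identity $H_0 = H_0$, while the first order reads $H_0\Theta_1 + H_1^\dagger = \Theta_1 H_0 + H_1$. Using $H_1^\dagger = -H_1$ this reduces to the single commutator condition $[H_0,\Theta_1] = 2H_1$, that is,
\be
 [L_x,\Theta_1] = -2i\,L_z\,.
 \label{comm}
\ee
At this point the whole $N$-dependence has been absorbed into the $su(2)$ generators, and the problem has become a purely algebraic one inside the Lie algebra.

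The decisive step is then to read the solution off the structure constants of $su(2)$. Since $[L_x, L_y] = iL_z$, the choice $\Theta_1 = -2L_y$ satisfies (\ref{comm}) identically, which is exactly the assertion (\ref{obecm}). Because $L_y^{(N)}$ is Hermitian, the candidate $\Theta^{(N)} = I - 2\gamma L_y^{(N)}$ is automatically Hermitian; and because the eigenvalues of $L_y^{(N)}$ are bounded (they are the weights $-j,\dots,j$ with $j=(N-1)/2$), the matrix $I - 2\gamma L_y^{(N)}$ is positive definite for all sufficiently small $\gamma$, hence an admissible physical metric. As an independent check I would confirm that this first-order formula matches, entry by entry, the explicit zero-parameter metrics already obtained at low $N$: the $\beta=0$ case of (\ref{medopp2}) for $N=2$ and the matrix (\ref{beznich}) for $N=4$ both reproduce $I - 2\gamma L_y$ to first order.

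The only point demanding care --- a subtlety rather than a real obstacle --- is the word ``simplest.'' The general first-order solution of (\ref{comm}) is $\Theta_1 = -2L_y + C$ with $C = C^\dagger$ any Hermitian operator commuting with $L_x$; such terms are precisely the linearized counterparts of the free real parameters $\beta,\delta,\kappa,\dots$ seen in the exact metrics. I would argue that the selection $C = 0$ is the natural minimal-anisotropy choice in the sense of Ref.~\cite{lotor}, and that it coincides with the vanishing-parameter specialization already used to display the low-$N$ metrics. The algebraic content of the theorem is thus immediate once (\ref{comm}) is recognized as an $su(2)$ structure relation; identifying $-2L_y$ as the canonical, and not merely a possible, representative is what requires the supporting minimality argument.
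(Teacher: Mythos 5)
Your proof is correct and follows essentially the same route as the paper: the paper's own proof is the one-line remark that ``it is sufficient to recall Eq.~(\ref{3wg}) and the commutation relations in $su(2)$,'' and your linearization of Dieudonn\'e's equation to $[L_x,\Theta_1]=-2iL_z$, solved by $\Theta_1=-2L_y$ via $[L_x,L_y]=iL_z$, is precisely the computation that remark alludes to. Your added observations on the residual freedom $\Theta_1\to\Theta_1+C$ with $[C,L_x]=0$ and the low-$N$ cross-checks are consistent with the paper's explicit constructions and only make the argument more complete.
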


\begin{proof}
It is sufficient to recall Eq.~(\ref{3wg}) and the commutation
relations in $su(2)$.
\end{proof}

\begin{cor}
\label{[5]}
The first-order eigenvalues of metric (\ref{obecm})
form the equidistant set
 $$
 \{\theta_j\}=
 \{
  1-2\,J\,\gamma\,,\,1-2\,(J-1)\,\gamma\,,\,\ldots\,,\,1-2\,\gamma\,,\,
 1, 1+2\,\gamma\,,\,
 1+4\,\gamma\,,\, \ldots\,,\, 1+2\,J\,\gamma\,
 \}
 $$
at odd $N=2J+1$, and the equidistant set
 $$
 \{\theta_j\}=
 \{
  1-(2\,J+1)\,\gamma\,,\,1-(2\,J-1)\,\gamma\,,\,\ldots\,,\,1-\gamma\,,\,
 1+\gamma\,,\,
 1+3\,\gamma\,,\, \ldots\,,\, 1+(2\,J+1)\,\gamma\,
 \}
 $$
at even $N=2J+2$.
\end{cor}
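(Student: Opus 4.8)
The plan is to read the first-order eigenvalues straight off the closed form $\Theta^{(N)}(\gamma)=I-2\gamma\,L_y^{(N)}+{\cal O}(\gamma^2)$ supplied by Theorem~\ref{[3]}. Since the zeroth-order term is the identity, the whole first-order splitting of the initially $N$-fold degenerate eigenvalue $1$ is governed by the perturbation $-2\gamma\,L_y^{(N)}$. First I would invoke degenerate first-order perturbation theory: because $\Theta^{(N)}(\gamma)$ is Hermitian and the perturbation $-2\gamma\,L_y^{(N)}$ already possesses a non-degenerate spectrum, the degeneracy at $\gamma=0$ is lifted completely at first order, so that each eigenvalue of $\Theta^{(N)}$ takes the form $1-2\gamma\lambda+{\cal O}(\gamma^2)$, with $\lambda$ ranging over the eigenvalues of $L_y^{(N)}$.

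The one substantive ingredient is then the spectrum of $L_y^{(N)}$. Here I would recall that $L_y^{(N)}$ is the $y$-component of angular momentum in the $N$-dimensional irreducible representation of $su(2)$, i.e.\ the spin-$j$ representation with $j=(N-1)/2$. Since $L_y$ is unitarily equivalent to $L_z$ inside the representation (the two being related by a rotation generated by the same $su(2)$), its spectrum is the standard equispaced set of magnetic quantum numbers $\{-j,-j+1,\ldots,j-1,j\}$, i.e.\ $N$ values with unit gaps. This is precisely the point at which the $su(2)$ commutation relations cited in the proof of Theorem~\ref{[3]} do their work.

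It then remains only to set $\lambda=m$ in $1-2\gamma\lambda$ and to separate the two parities. For odd $N=2J+1$ one has $j=J$, so $m$ runs over the integers $-J,\ldots,J$ and the values $1-2\gamma m$ reproduce the first equidistant list, carrying the unit value $1$ at $m=0$ and a step $2\gamma$. For even $N=2J+2$ one has $j=J+1/2$, so $m$ runs over the half-integers $\pm1/2,\pm3/2,\ldots,\pm(2J+1)/2$; now $2m$ is an odd integer and $1-2\gamma m=1-(2m)\gamma$ reproduces the second list, which omits the value $1$ but keeps the same step $2\gamma$.

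I expect essentially no obstacle, since the claim is little more than a repackaging of the elementary fact that an angular-momentum component in an irreducible $su(2)$ representation has equispaced spectrum. The only point worth a word of care is the perturbative assertion that these first-order values are correct up to ${\cal O}(\gamma^2)$; this holds because $L_y^{(N)}$ has simple eigenvalues, so the degeneracy is fully resolved already at first order and no residual first-order splitting can survive into higher orders.
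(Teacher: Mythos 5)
Your argument is correct and is essentially the route the paper itself takes: the corollary is given no separate proof and rests on the fact, implicit in the proof of Theorem \ref{[3]}, that $L_y^{(N)}$ in the spin-$j$ irreducible representation with $j=(N-1)/2$ has the equispaced spectrum $\{-j,-j+1,\ldots,j\}$, which is precisely your key ingredient. Your closing remark invoking first-order perturbation theory for a Hermitian matrix to justify neglecting the ${\cal O}(\gamma^2)$ remainder is a small extra dose of rigor the paper leaves tacit, but it does not change the approach.
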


 \noindent
The higher-power corrections will start playing a role when the
strength $\gamma$ of the non-Hermiticity ceases to be small. Still,
the smallest eigenvalue $1-(N-1)\gamma + c\,\gamma^2+\ldots$ will
vanish, in the leading-order approximation,  at $\gamma_{critical}
\approx 1/(N-1)$. At this boundary the metric will lose its
invertibility and positivity.

The estimate of $\gamma_{critical}$ decreases with $N$. Its
second-order amendment becomes $c-$dependent so that once we return
to the $N=4$ example (in which $c$ is positive --  cf. the
Taylor-series formulae in the proof of Theorem \ref{[3]}), the
incorporation of the second order correction makes the amended
estimate of $\gamma_{critical}$ slightly larger. Incidentally, for
our $N=4$ matrix (\ref{beznich}) the non-approximative, exact value
of $\gamma_{cricical}=1/\sqrt{2}\ $ happens to be even larger.

\subsection{The $N=4$ metric covering the whole range of $\gamma$}

With trivial $\beta=\kappa=0$ and a tentative choice of $\delta =
\delta(\gamma)$ one obtains vanishing $z=\tau= 0$ and nontrivial
$y$, $u$ and $\rho$. All of the matrix elements of $\Theta$ appear
to be nonzero and strictly real or purely imaginary. In a trial and
error manner we verified that for $\delta(\gamma) =
-\sqrt{3}(\gamma+\nu\,\gamma^3)/(\nu +1)$ with $\nu=1$, $\nu=2$,
$\nu=3$ or $\nu=4$ the metric remains regular and positive along the
whole physically relevant interval of $\gamma\in (0,1)$. Even at the
smallest $\nu=1$ the graphical illustration of the collapse of the
invertibility in the EP limit $\gamma\to 1$ displayed in
Fig.~\ref{uba1} is persuasive. Omitting the fourth root (such that
$\lim_{\gamma \to 1}\theta_4(\gamma)=8$) the picture shows that all
of the other eigenvalues of the metric move to zero at a very
different rate with $\gamma$ reaching its EP maximum $\gamma=1$.

\begin{figure}[h]                    
\begin{center}                         
\epsfig{file=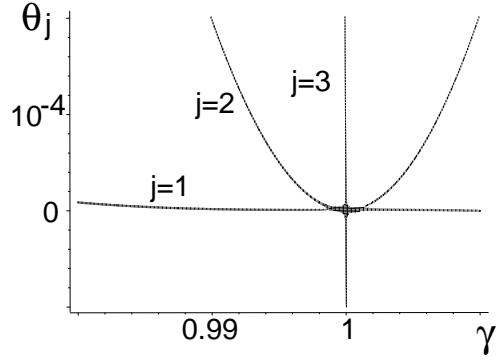,angle=270,width=0.4\textwidth}
\end{center}    
\vspace{2mm} \caption{The decrease of the three lowest eigenvalues
$\theta_j(\gamma)$ of the $N=4$ matrix (\ref{anzam4g}) to zero for
the growth of $\gamma$ to its maximal physical value
$\gamma^{(EP)}=1$. The fourth root is much larger and lies out of
the picture.
 \label{uba1}
 }
\end{figure}

In the domain of
small $\gamma$ and at $\delta=0$, the $\gamma-$proportionality
coefficients (-3,-1,1,3) as given by Theorem \ref{[3]}
and its Corollary \ref{[5]} get
modified and replaced by the
less compact set of
quantities $-3/2\pm\sqrt {3}$ and $1/2\pm\sqrt {7}$, i.e.,
numerically, -3.23,-2.15, 0.23, 3.15, obtained as the roots of the
leading-order secular polynomial
$s^4+(2)\,s^3+(-21/2)\,s^2+(-39/2)\,s+81/16$.


\section{Construction of metrics at $N\geq 5$\label{vvkol}}

\subsection{Physical Hilbert space at
$N=5$\label{5kol}}

In the domain of very small $\gamma$ the process of the
Hermitization of the CBH matrix Hamiltonians may be based on Theorem
\ref{[3]} at any $N$. Nevertheless, as long as the estimated
$\gamma_{critical} \approx 1/(N-1)$ decreases quickly with the
growth of $N$, the range of applicability of the
linear approximation becomes more and more restricted. The
higher-precision constructions become of an enhanced interest at
$N\geq 5$.

\subsubsection{General four-parametric metric}

For
 \be
 H^{(5)}_{(CBH)}(\gamma)=\left[
 \begin {array}{ccccc} -4\,i\gamma&2&0&0&0
 \\
 \noalign{\medskip}2&-2\,i\gamma&\sqrt {6}&0&0
 \\
 \noalign{\medskip}0&\sqrt{6}&0&\sqrt {6}&0\\\noalign{\medskip}0&0&
 \sqrt {6}&2\,i\gamma&2\\\noalign{\medskip}0&0&0 &2&4 \,i\gamma\end
 {array} \right]\,
 \label{petpa}
 \ee
we may consider the general Hermitian candidate for the metric with
normalization
 $
\Theta^{(5)}_{1,1}(\gamma) =1$,
 \be
 \Theta^{(5)}(\gamma)=
  \left[ \begin {array}{ccccc} {\it 1}&{\it {X}}+ix&{\it {Y}}+iy&{\it
  {Z}
 }+iz&{\it {W}}+iw\\\noalign{\medskip}{\it {X}}-ix&{\it {R}}&{\it {U}}+iu&{
 \it {V}}+iv&{\it {Z}}+iz\\\noalign{\medskip}{\it {Y}}-iy&{\it {U}}-iu&{
 \it {T}}&{\it {U}}+iu&{\it {Y}}+iy\\\noalign{\medskip}{\it {Z}}-iz&{\it {V}
 }-iv&{\it {U}}-iu&{\it {R}}&{\it {X}}+ix\\\noalign{\medskip}{\it {W}}-iw&{
  \it {Z}}-iz&{\it {Y}}-iy&{\it {X}}-ix&{\it 1}\end {array} \right]\,.
 \label{petia}
 \ee
Its insertion in Eq.~(\ref{requi}) yields the sequence of
definitions of the four imaginary components of the first-row
elements,
 $$
 x=2\,{\it \gamma}\,,\ \ y=\sqrt {6}{\it \gamma}\,{\it {X}}\,,
 \ \ \ \ z=1/3\,{\it \gamma}\, \left( 3\,{\it {Y}}+2\,{{\it
 \gamma}}^{2}\sqrt
 {6}
 \right) \sqrt {6}\,,
 \ \ \ \ w=v+{\it \gamma}\,{\it {Z}}-3\,{\it {X}}\,{\it \gamma}
 $$
as well as the two imaginary components of the second-row elements,
 $$
 u=1/6\,{\it \gamma}\, \left( 6+12\,{{\it \gamma}}^{2}+\sqrt {6}{\it
 {Y}}
 \right) \sqrt {6}\,,
 \ \ \ \
 v=4\,{{\it \gamma}}^{3}{\it {X}}+3\,{\it {X}}\,{\it \gamma}+{\it
 \gamma}\,{\it {Z}}\,.
 $$
The real parts to be defined are the two off-diagonal items
 $$
 {\it {U}}=2\,{{\it \gamma}}^{2}\sqrt {6}{\it {X}}+1/2\,\sqrt {6}{\it
 {X}}+1/2 \,\sqrt {6}{\it {Z}}\,,
 \ \ \ \ {\it {V}}={\it {W}}+{{\it \gamma}}^{2}\sqrt {6}{\it
 {Y}}+4\,{{\it
 \gamma}}^{4}+1/ 2\,\sqrt {6}{\it {Y}}
 $$
and their two diagonal partners
 $$
 {\it {R}}=1+1/2\,\sqrt {6}{\it {Y}}+6\,{{\it \gamma}}^{2}\,,
 \ \ \ \ {\it {T}}=1/6\, \left( 4\,{\it {Y}}+8\,{{\it
 \gamma}}^{2}\sqrt
 {6}+8\,{{\it \gamma}}^{4}\sqrt {6}+8\,{{\it \gamma}}^{2}{\it
 {Y}}+\sqrt {6}+\sqrt {6}{\it {W} } \right) \sqrt {6}\,.
 $$
The resulting
$\Theta^{(5)}(\gamma)=\Theta^{(5)}(\gamma,{X},{Y},{Z},{W})$ is a
fairly compact candidate for the general four-parametric metric.

\subsubsection{The requirement of positivity}

The key technical obstacle arises when one wishes to specify the
exact boundaries of the physical domain of the parameters for which
the metric remains well defined, i.e., invertible and positive
definite. For our present purposes we only need to know the metric
at the small $\gamma$. In a way
indicated by Corollary \ref{[5]}, the eigenvalues of the metric are
then revealed positive and equidistant.

At $N=5$  the approximate metric candidate which is linear in
$\gamma$ reads
 $$
 \Theta^{(5)}_0(\gamma)=
 \left[ \begin {array}{ccccc} 1&2\,i{\it \gamma}&0&0&0
 \\
 \noalign{\medskip}-2\,i{\it \gamma}&1&
 i{\it \gamma}\,\sqrt {6}&0&0\\\noalign{\medskip}0&-i{\it
\gamma}\,\sqrt {6}&1&i{\it \gamma}\,\sqrt
{6}&0\\\noalign{\medskip}0&0&-i{\it \gamma}\,\sqrt {6}&1&2\,i{\it
 \gamma}\\\noalign{\medskip}0&0&0&-2\,i{\it \gamma}&1
 \end {array} \right]\,.
 $$
The quintuplet of its exact eigenvalues $\theta_j=1+r_j$ coincides with
the roots of the exact secular polynomial  which gets completely
factorized,
 $$
 0=\left( \theta-1 \right)
 \left( \theta-1+2\,{\it \gamma} \right)  \left( \theta-1-2\,{
\it \gamma} \right)  \left( \theta-1-4\,{\it \gamma} \right)  \left(
\theta-1+4\,{\it \gamma } \right)\,.
 $$
For the proof it is sufficient to turn attention to the deviations
$r_j$ from the unit value. Then, the {\em exact\,} secular equation
is again easily derived.
As long as its explicit form becomes rather lengthy (containing as
many as 22 terms), the detailed discussion of the mutual dependence
of its parameters and roots would be a formidable task. Fortunately,
once one omits all of the higher order corrections, the reduced
equation reads ${r}^{5}-20\,{{\it \gamma}}^{2}{r}^{3}+64\,r{{\it
\gamma}}^{4}=0$ and remains solvable easily yielding the roots $r$
proportional to $\gamma$ (i.e., small).

\subsubsection{Two-parametric subfamily of metrics}

In the  models with $N=2$, 3 and 4 we saw that an
important simplification resulted from the purely
imaginary choice of $\Theta_{1,2}$. One of the consequences was the
purely real form of  $\Theta_{1,3}$. Let us now try to generalize
this experience and introduce, tentatively, a chessboard-inspired
ansatz with, in general, ${\rm Im\ }\Theta_{i,j}=0$ for $i+j=$ even
and ${\rm Re\ }\Theta_{i,j}=0$ for $i+j=$ odd. At $N=5$ it reads
 $$
 \Theta^{(5)}(y,w)=
 \left[
 \begin {array}{ccccc} 1&ix&{\it y}&iz&{\it w}
 \\
 \noalign{\medskip}-ix&{\it {\rho}}&iu&{\it v}&iz
 \\
 \noalign{\medskip}{
\it y}&-iu&{\it {\tau}}&iu&{\it y}
\\
\noalign{\medskip}-iz&{\it v}&-iu&{
\it {\rho}}&ix
\\
\noalign{\medskip}{\it w}&-iz&{\it y}&-ix&1\end {array}
 \right]\,.
 $$
Here, in the light of Eq.~(\ref{requi}) we have $x=2\,{\it \gamma}$,
$z=\sqrt {6}{\it y}\,{\it \gamma}+4\,{{\it \gamma}}^{3}$, $u={\it
y}\,{\it \gamma}+{\it \gamma}\,\sqrt {6}+2\,\sqrt {6}{{\it
\gamma}}^{3}$ while $v=\sqrt {6}{\it y}\,{{\it \gamma}}^{2}+4\,{{\it
\gamma}}^{4}+1/2\,{\it y}\, \sqrt {6}+{\it w} $. On the main
diagonal we get ${\rho}=6\,{{\it \gamma}}^{2}+1+1/2\,{\it y}\,\sqrt
{6}$ and ${\tau}=1+{\it w}+\left( 4\,{\it y}+8\,{\it y}\,{{\it
\gamma}}^{2}\right)/ \sqrt {6} +8\,{{ \it \gamma}}^{2}+8\,{{\it
\gamma}}^{4}
 $.
The resulting remarkable pattern of the distribution of the powers
of $\gamma$ is best visible when we choose trivial $y=w=0$ and get
the maximally elementary metric without free
parameters,
 \be
 \Theta^{(5)}=
 \left[ \begin {array}{ccccc} 1&2\,i{\it \gamma}&0&4\,i{{\it
 \gamma}}^{3}&0
 \\
 \noalign{\medskip}-2\,i{\it \gamma}&6\,{{\it \gamma}}^{2}+1
 &i{\it \gamma}\,\sqrt {6} \left( 1+2\,{{\it \gamma}}^{2} \right)
 &4\,{{\it \gamma}}^{4}&4\,
 i{{\it \gamma}}^{3}\\\noalign{\medskip}0&-i{\it \gamma}\,\sqrt {6}
 \left( 1+2\,{{\it \gamma}}^{2} \right) &1+8\,{{ \it
 \gamma}}^{2}+8\,{{\it \gamma}}^{4}&i{\it \gamma}\,\sqrt {6} \left(
 1+2\,{{\it \gamma}}^{2} \right) &0
 \\
 \noalign{\medskip}-4\,i{{\it \gamma}}^{3}
 &4\,{{\it \gamma}}^{4}&
 -i{\it \gamma}\,\sqrt {6} \left( 1+2\,{{\it \gamma}}^{2} \right)
 &6\,{{\it \gamma}}^{
 2}+1&2\,i{\it \gamma}\\\noalign{\medskip}0&-4\,i{{\it
 \gamma}}^{3}&0&-2\,i{ \it \gamma}&1\end {array} \right]\,.
 \label{gived}
 \ee
This
matrix is positive definite (i.e., eligible metric) in the
reasonably large interval of $\gamma\in (0,\gamma_{\max}^{(5)})$
where $\gamma_{\max}^{(5)}=1/2\,\sqrt {\sqrt {5}-1} \approx
0.5558929700$. The $\gamma-$dependence of the five eigenvalues
$\theta_j$ is shown in Fig.~\ref{uba2a}.

\begin{figure}[h]                    
\begin{center}                         
\epsfig{file=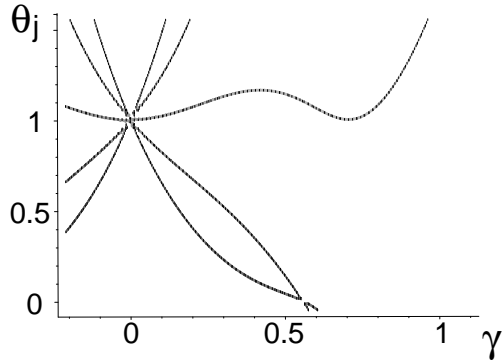,angle=270,width=0.4\textwidth}
\end{center}    
\vspace{2mm} \caption{The $\gamma-$dependence of the eigenvalues of
the $N=5$ matrix (\ref{gived}) and their positivity in the interval
of $\gamma \in \left [0,\gamma_{\max}\right )$ with
$\gamma_{\max}=1/2\,\sqrt {\sqrt {5}-1}$.
 \label{uba2a}
 }
\end{figure}

\subsection{Physical Hilbert space at
$N=6$\label{mol}}

Our last explicit chessboard-inspired ansatz reads
 \be
 \Theta^{(6)}= \left[ \begin {array}{cccccc}
  1&{\rm i\,}x&{\it {y}}&{\rm i\,}z&{\it {w}}&{\rm i\,}{\mu}
\\
\noalign{\medskip}-{\rm i\,}x &{\it {\rho}}&{\rm i\,}u&{\it
{v}}&{\rm i\,}{\it {\zeta}}&{\it {w}}
\\
\noalign{\medskip}{\it {y}}&-{\rm i\,}u
&{\it {\tau}}&{\rm i\,}{\sigma}&{\it {v}}&{\rm i\,}z
\\
\noalign{\medskip}-{\rm i\,}z&{\it {v}}
&-{\rm i\,}{\sigma}&{\it {\tau}}&{\rm i\,}u&{\it {y}}
\\
\noalign{\medskip}{\it {w}}&-{\rm i\,}{\it {\zeta}}
&{\it {v}}&-{\rm i\,}u&{\it {\rho}}&{\rm i\,}x
\\\noalign{\medskip}-{\rm i\,}{\mu}&{\it {w}
}&-{\rm i\,}z&{\it {y}}&-{\rm i\,}x&1\end {array}
 \right].
 \label{fullfle}
 \ee
Its imaginary matrix elements are all defined by formulae
 $x=\sqrt {5}{\it \gamma}$,
 $z=2\,\sqrt {5}{{\it \gamma}}^{3}\sqrt {2}+3\,{\it {y}}\,{\it
 \gamma}$
 and
 ${\mu}=16\,{{\it \gamma}}^{5}
 +2\,{{\it \gamma}}^{3}\sqrt {2}\sqrt {5}{\it {y}}+\sqrt {
5}{\it {w}}\,{\it \gamma} $
for the first row and by formulae
 $u=2\,{\it \gamma}\,\sqrt {2}
 -6/5\,\sqrt {5}{\it {y}}\,{\it \gamma}+3/5\, \left( 2
\,\sqrt {5}{{\it \gamma}}^{3}\sqrt {2}+3\,{\it {y}}\,{\it \gamma}
\right) \sqrt {5} $,
  ${\zeta}=
  2/5\,\sqrt {2} \left( 2\,\sqrt {5}{{\it \gamma}}^{3}\sqrt {2}+3\,{\it
  {y}}
\,{\it \gamma} \right) \sqrt {5}+3/5\,\sqrt {5}{\it {w}}\,{\it
\gamma}+16\,{{ \it \gamma}}^{5}+2\,{{\it \gamma}}^{3}\sqrt {2}\sqrt
{5}{\it {y}} $
 and
 ${\sigma}=3\,{\it \gamma}
 -6/5\,\sqrt {2}\sqrt {5}{\it {y}}\,{\it \gamma}+3/5\,\sqrt {2}
 \left( 2\,\sqrt {5}{{\it \gamma}}^{3}\sqrt {2}+3\,{\it {y}}\,{\it
 \gamma}
 \right) \sqrt {5}-2/5\,{{\it \gamma}}^{2}\sqrt {2} \left( 2\,\sqrt
 {5}{{
\it \gamma}}^{3}\sqrt {2}+3\,{\it {y}}\,{\it \gamma} \right) \sqrt
{5}+1/5\, \sqrt {5}{\it {w}}\,{\it \gamma}+16\,{{\it
\gamma}}^{5}+2\,{{\it \gamma}}^{3}\sqrt {2}\sqrt {5}{\it {y}} $
for the second and the third row. As long as $y$ and $w$
are kept as independent variables, we only have to define
the remaining real matrix elements
 ${\rho}=8\,{{\it \gamma}}^{2}+1+2/5\,\sqrt {5}{\it {y}}\,\sqrt {2}$
 and
 ${\tau}=3/5\,\sqrt {5}{\it {y}}\,\sqrt {2}+12\,{{\it
 \gamma}}^{2}-6/5\,\sqrt
 {2}
\sqrt {5}{\it {y}}\,{{\it \gamma}}^{2}+6/5\,\sqrt {2}{\it \gamma}\,
\left( 2\, \sqrt {5}{{\it \gamma}}^{3}\sqrt {2}+3\,{\it {y}}\,{\it
\gamma} \right) \sqrt { 5}+1+3/5\,{\it {w}}\,\sqrt {5} $
on the main diagonal, and, last but not least,
 ${v}=4/5\,{\it \gamma}\,
  \left( 2\,\sqrt {5}{{\it \gamma}}^{3}\sqrt {2}+3\,{\it {y}}
\,{\it \gamma} \right) \sqrt {5}+3/5\,{\it {y}}\,\sqrt
{5}+2/5\,\sqrt {5}{ \it {w}}\,\sqrt {2} $.

These results confirm the expectations and extrapolation hypotheses.
Once we accept the most natural simplification {${y}={w}=0$} and
once we keep just the terms which are linear in $\gamma$ we get
again the tridiagonal metric (\ref{obecm})
with the factorized secular equation,
 $$
 0=\left( \theta-1 \right)  \left( \theta-1+2\,{\it \gamma} \right)
   \left( \theta-1-2\,{
\it \gamma} \right)  \left( \theta-1-4\,{\it \gamma} \right)  \left(
\theta-1+4\,{\it \gamma } \right)\,.
 $$
After one incorporates the higher-order corrections we get the
parameter-free version of the metric. The simplification of its
matrix elements included
 %
 $z=2\,\sqrt {5}{{\it \gamma}}^{3}\sqrt {2}$,
 $\mu=16\,{{\it \gamma}}^{5}
$,
$u=2\,{\it \gamma}\,\sqrt {2}+6\,{{\it \gamma}}^{3}\sqrt {2}$,
 ${\zeta}=8\,{{\it \gamma}}^{3}+16\,{{\it \gamma}}^{5}
$
and
 $\sigma=3\,{\it \gamma}+12\,{{\it \gamma}}^{3}+8\,{{\it \gamma}}^{5}
$
as well as
 ${\rho}=1+8\,{{\it \gamma}}^{2}$,
 ${\tau}=1+12\,{{\it \gamma}}^{2}+24\,{{\it \gamma}}^{4}
$
and, finally,
 $v=8\,{{\it \gamma}}^{4}\sqrt {2}
$. This enables us to display here the whole matrix, with its
symmetry-determined matrix elements omitted,
 \be
 \Theta^{(6)}_{(CBH)}= \left[ \begin {array}{ccccc}
  1&i\sqrt {5}{\it \gamma}&0&\ldots&16\,i{{\it
\gamma}}^{5}
\\
\noalign{\medskip}-i \sqrt {5}{\it \gamma}&1+8\,{{\it \gamma}}^{2}&
2\,i\,\sqrt {2}\, {\it \gamma}\,\left(1+ 3\,{{\it \gamma}}^{2}
\right) &\ldots &0
\\
\noalign{\medskip}0&-
2\,i\,\sqrt {2}\, {\it \gamma}\,\left(1+ 3\,{{\it \gamma}}^{2}
\right) &1+12\,{{\it \gamma}}^{2}+24\,{{\it
\gamma}}^{4}&\ldots&2\,i\sqrt {10}{{\it \gamma}}^{3}
\\
\noalign{\medskip}-2\,i\sqrt {10}{{\it \gamma}}^{3}&8\,\sqrt
{2}\,{{\it \gamma} }^{4}&-i\,{\it \gamma}\, \left( 3+12\,{{\it
\gamma}}^{2}+8\,{{\it \gamma}}^ {4} \right) &\ldots &0
\\
\noalign{\medskip}0&-8\,i\,{{\it \gamma}}^{3} \left( 1+2\,{{\it
\gamma}}^{2}
 \right) &8\,{{\it \gamma}}^{4}\sqrt {2}&\ldots&i\sqrt {5}{ \it
\gamma}
\\
\noalign{\medskip}-16\,i{{\it \gamma}}^{5}&0&-2\,i\,\sqrt
{10}\,{{\it \gamma}}^{3}&\ldots&1\end {array} \right]\,.
 \label{lasta}
 \ee
From the exact value of its smallest eigenvalue, incidentally, we
managed to deduce the exact value of the domain-boundary quantity
$\gamma^{(6)}_{\max}=1/2$. Moreover, we also managed to evaluate the
second-order corrections to the eigenvalues of the metric
$\theta_j=1+A_j\,\gamma+B_j\,\gamma^2+\ldots$. This result,
summarized in Table \ref{pexp4}, may be read as a strong
encouragement of extrapolations towards $N>6$.

\begin{table}[h]
\caption{Coefficients in the approximation
$\theta^{(6)}_j(\gamma)=1+A_j\,\gamma+B_j\,\gamma^2+\ldots$ }
\label{pexp4}

\vspace{2mm}

\centering
\begin{tabular}{||c||c|c||}
\hline \hline
   $j$  & $A_j$ &
     $B_j$
    \\
 \hline \hline
 1& 5  & 10\\
 2&  3  & 6\\
 3&  1 & 4 \\
 4& -1 & 4 \\
 5&  -3 & 6 \\
 6  &  -5& 10\\
 \hline
 \hline
\end{tabular}
\end{table}

 \noindent
Presumably, also many other features of the $N=6$ CBH model may be
expected to find their analogues at the general matrix dimensions
$N$. Among the clearest candidates, tendencies and patterns of
possible extrapolations let us mention the last few hypotheses which
could help us to reduce, in the future, the recurrences for the
matrices $\Theta^{(N)}$ with polynomial entries to the recurrences
for the mere arrays of the coefficients.

\begin{itemize}

\item
At any $N$, the chessboard-inspired complex ansatz for
$\Theta^{(N)}(\gamma)$ may be required to be a strictly real matrix
after a formal analytic-continuation replacement of $\gamma$ by
${\rm i} \alpha$ with real $\alpha$;

\item
the powers $\gamma^{K_n}$  entering the matrix element
$\Theta^{(N)}_{i,j}(\gamma)$ may be  conjectured to be limited by
the following empirical rules:

\begin{enumerate}

\item
$K_n$=even iff $i+j=$ even; $K_n$=odd iff $i+j=$ odd;

\item
$\min K_n = |i-j|$;
$\max K_n = \min(|i+j-2|,|i+j-2N-2 |)$.

\end{enumerate}

\end{itemize}

\section{Summary\label{summat}}

Quantum theory offers a counterintuitive picture of reality. One has
to replace, e.g., the energy of a classical system by an operator.
In the most common unitary-evolution scenario such an operator
(i.e., Hamiltonian) must be self-adjoint,
$\mathfrak{h}=\mathfrak{h}^\dagger$. Once we admit a ``hidden''
interaction with environment, it may cause the loss of the
self-adjointness of the ``effective'' Hamiltonian, $H \neq
H^\dagger$ \cite{Nimrod}. In parallel, the spectrum becomes complex
and, due to the possible losses or gains from the environment, the
evolution ceases to be unitary.

For a long time it escaped the attention of physicists that there
also exists a fairly large family of quantum systems in which the
Hamiltonians are admitted non-Hermitian but still, the system
remains closed, exhibiting no interaction with an ``environment''.
The evolution is unitary, in an apparent contradiction with the
Stone's theorem. Fortunately, the paradox results from a
misunderstanding: the ``false'' non-self-adjointness is detected in
an ill-chosen Hilbert space ${\cal H}^{(F)}$.

Such an innovative use of non-Hermitian generators $H$ of the
unitary evolution found its most persuasive success in nuclear
physics \cite{Geyer} or in condensed-matter physics \cite{Dyson}.
The more ambitious theoretical implementations of the idea were
pursued in perturbation theory \cite{BG}, in certain relativistic
\cite{aliKG} and supersymmetric \cite{susy} extensions of quantum
mechanics plus, perhaps, in quantum cosmology \cite{Bang} and in
quantum theory of catastrophes \cite{catast}. In all of these
applications, Dieudonn\'{e} relation~(\ref{requi}) appeared to
connect a given non-Hermitian observable Hamiltonian $H$ with all of
its admissible Hermitizations, i.e., with all of the eligible
physical Hilbert-space metrics $\Theta=\Theta(H)$. For our present
Bose-Hubbard family (\ref{Uwemo}) of quantum Hamiltonians, in
particular, we were able to guarantee the stable, unitary evolution
of the system via the construction  of the operator
$\Theta=\Theta(H)$ at small $\gamma$. For each representation (i.e.,
matrix dimension $N$) we recommended the direct solution of
Dieudonn\'{e}'s Eq.~(\ref{requi}) and we showed that it is feasible.

For the first nontrivial matrix dimension $N=3$ we admit that the
purely algebraic part of the task already looks rather complicated.
Still, a suitable amendment of the approach made the construction
feasible. The essence of the simplification lies in the ansatz for
$\Theta(H)$ with symmetry with respect to the second diagonal. We
revealed that such an assumption leads to the fully general
three-parametric family of the candidates for the metric at $N=3$,
and that it might open the way towards the study of models with
general $N$.

At the higher $N\geq 4$ we encountered another obstacle during the
determination of the boundary of the domain ${\cal D}$ of the
``admissible'' free parameters rendering the matrix of metric
positive definite. This goal appeared overambitious and hopeless. It
turned out that the construction could hardly be algebraic and/or
non-numerical. Due to the enormous growth of the unfriendliness of
secular polynomials the task of the proof of positivity of the
candidates for the metric appeared next to impossible. Fortunately,
in a climax of our paper we arrived at an innovative, feasible
resolution of the problem. The proof has been found, thanks to the
restriction of attention to the sufficiently small vicinity of the
Hermitian limit $\Theta(H) \to I$, in the omission of the
higher-powers of $\gamma$ and in the ultimate discovery of the
elementary Lie-algebraic form of the leading-order difference
$\Theta(H)-I \sim \gamma L_y+{\cal O}(\gamma^2)$ and of the
elementary factorizability of the secular polynomial.

\section*{Acknowledgement}

Work supported by GA\v{C}R Grant Nr. 16-22945S.

\newpage

\end{document}